\newcommand{\flatfrac}[2]{#1/#2}
\newcommand{\ffrac}{\flatfrac}
\newtheorem{theorem}{Theorem}[section]
\newtheorem*{theorem*}{Theorem}
\newtheorem{Claim}[theorem]{Claim}
\newtheorem*{claim*}{Claim}
\newtheorem{proposition}[theorem]{Proposition}
\newtheorem*{proposition*}{Proposition}
\newtheorem{lemma}[theorem]{Lemma}
\newtheorem*{lemma*}{Lemma}
\newtheorem{corollary}[theorem]{Corollary}
\newtheorem*{conjecture*}{Conjecture}
\newtheorem{fact}[theorem]{Fact}
\newtheorem*{fact*}{Fact}
\newtheorem*{hypothesis*}{Hypothesis}
\theoremstyle{definition}
\newtheorem{definition}[theorem]{Definition}
\newtheorem{construction}[theorem]{Construction}
\newtheorem{SDP}[theorem]{SDP}
\newcommand{\savehyperref}[2]{\texorpdfstring{\hyperref[#1]{#2}}{#2}}
\newcommand{\Sref}[1]{\hyperref[#1]{\S\ref*{#1}}}
\renewcommand{\leq}{\leqslant}
\renewcommand{\geq}{\geqslant}
\newenvironment{mybox}
{\center \noindent\begin{boxedminipage}{1.0\linewidth}}
{\end{boxedminipage}
\noindent
}
\newcommand{\mper}{\,.}
\newcommand{\paren}[1]{\left(#1 \right )}
\newcommand{\brac}[1]{[#1 ]}
\newcommand{\set}[1]{\left\{#1\right\}}
\newcommand{\abs}[1]{\left\lvert#1\right\rvert}
\newcommand{\Abs}[1]{\left\lvert#1\right\rvert}
\newcommand{\norm}[1]{\left\lVert#1\right\rVert}
\newcommand{\defeq}{\stackrel{\textup{def}}{=}}
\newcommand{\inprod}[1]{\left\langle #1\right\rangle}
\newcommand{\R}{\mathbb R}
\newcommand{\subjectto}{\text{subject to}}
\newcommand{\Esymb}{\mathbb{E}}
\newcommand{\Psymb}{\mathbb{P}}
\DeclareMathOperator*{\E}{\Esymb}
\DeclareMathOperator*{\Var}{{\sf Var}}
\DeclareMathOperator*{\ProbOp}{\Psymb}
\newcommand{\var}[1]{\Var \left[#1\right]}
\newcommand{\e}{\epsilon}
\newcommand{\one}{\mathbbm{1}}
\definecolor{DSgray}{cmyk}{0,0,0,0.7}
\let\e\varepsilon
\newcommand{\Lovasz}{Lov\'asz\xspace}
\newcommand{\Hastad}{H{\aa}stad\xspace}
\newcommand{\etal}{et al. }
\newcommand{\bigO}{\mathcal{O}}
\newcommand{\bigo}[1]{\bigO\left(#1\right)}
\newcommand{\poly}{{\sf poly}}
\newcommand\numberthis{\addtocounter{equation}{1}\tag{\theequation}} 
\author{
	Yash Khanna \\IISc, Bangalore\footnote{Indian Institute of Science, Bangalore, India.}\\ \href{mailto:yashkhanna846@gmail.com}{yashkhanna846@gmail.com}
	\and 
	Anand Louis\\IISc, Bangalore\footnotemark[1]\\ \href{mailto:anandl@iisc.ac.in}{anandl@iisc.ac.in}
	\and 
	Rameesh Paul\\IISc, Bangalore\footnotemark[1]\\ \href{mailto:rameeshpaul@iisc.ac.in}{rameeshpaul@iisc.ac.in}
}
\title{Independent Sets in Semi-random Hypergraphs}
\begin{document}
\maketitle

\begin{abstract}
A set of vertices in a hypergraph is called an independent set if no hyperedge is completely contained inside the set. Given a hypergraph, computing its largest size independent set is an NP-hard problem. 

In this work, we study the independent set problem on hypergraphs in a natural semi-random family of instances. 
Our semi-random model is inspired by the Feige-Kilian model \cite{MR1894527}. This popular model has also been studied in the works of \cite{MR1894527, DBLP:journals/eccc/Steinhardt17, McKenzie_2020} etc. McKenzie, Mehta, and Trevisan \cite{McKenzie_2020} gave algorithms for computing independent sets in such a semi-random family of graphs. The algorithms by McKenzie~\etal \cite{McKenzie_2020} are based on rounding a ``crude-SDP''. We generalize their results and techniques to hypergraphs for an analogous family of hypergraph instances. Our algorithms are based on rounding the ``crude-SDP'' of McKenzie~\etal \cite{McKenzie_2020}, augmented with ``Lasserre/SoS like'' hierarchy of constraints. Analogous to the results of McKenzie~\etal \cite{McKenzie_2020}, we study the ranges of input parameters where we can recover the planted independent set or a large independent set.
\end{abstract}

\section{Introduction}
An independent set of a hypergraph $H=(V,E)$ is a subset of vertices such that no hyperedge is completely contained inside the subset.
Computing a maximum independent set is a fundamental problem in the study of algorithms. The problem has applications in areas such as resource allocation in wireless networks \cite{MR3700397}, data clustering \cite{6816636}, computational biology \cite{MR2516078}, etc.

The problem of computing a maximum size independent set in graphs is well known to be NP-hard \cite{Karp1972}.
\Hastad \cite{DBLP:journals/eccc/ECCC-TR97-038} showed that it is hard to approximate the maximum independent set in graphs to better than a factor of $n^{1-\e}$ for any $\e >0$ unless $NP=ZPP$. Zuckerman \cite{v003a006} showed that there is no possible approximation ratio better than $n^{1-\e}$ unless $P=NP$. This hardness of approximation holds for the independent set problem on hypergraphs as well, since it generalizes the independent set problem on graphs. 

There has been a lot of work studying approximation algorithms of independent sets in graphs and hypergraphs, see \prettyref{sec:related_work} for a brief survey. Another direction of research related to intractable problems is to study families of ``easier'' instances of the problem. This includes studying various random and semi-random models of instances, instances satisfying certain properties, etc. We give a brief survey of the special class of graphs for which the independent set problem has been studied in \prettyref{sec:related_work}.

The starting point in the study of random instances for the independent set problem in graphs were the $G(n,p)$ instances (Erdős–Rényi random graphs). 
Analysis of $G(n,p)$ \cite{matula} showed that a random graph don't have an independent set of size more than $\paren{2+ o(1)}\log_{1/(1-p)} n$, w.h.p., for a large range of $p$. A simple algorithm can be used to compute an independent set of size $\log_{1/(1-p)} n$, w.h.p., but computing an independent set larger than this seems to be hard. Another popular model, the {\em planted solution model} considers the problem of recovering a hidden planted structure of size $k$ in a graph with $n$ vertices. 
For the planted clique (or independent set) model, Alon, Krivelevich and Sudakov \cite{MR1662795} showed that we can recover the planted clique as long as $k=\Omega(\sqrt{n})$ (for a constant $p$). 
Blum and Spencer \cite{BLUM1995204} studied {\em semi-random models} of $k$-colorable graphs; such models allow an adversary to modify the instance without changing the planted structure. The model is defined by the set of actions allowed to the adversary.
For the planted clique problem, in a rich adversarial semi-random model introduced by Feige and Kilian \cite{MR1894527}, the algorithm of \cite{McKenzie_2020} can recover the clique for $k = \Omega_p(n^{\ffrac{2}{3}})$ (see \prettyref{sec:related_work} for precise statement).
We note that algorithms for independent set in graphs mentioned above hold more generally; the results here are stated assuming a constant value of $p$ for the purpose of illustration.

The above is a broad classifications of the models and there are many other probabilistic generative models that fit at some intermediate hierarchy in this classification. A key advantage of studying random and semi-random instances is that it gives us insights into which aspects of the problem makes it hard. 
Often algorithms for stronger models and stronger regimes of parameters may require using more advanced tools and techniques. For example, in the case of planted cliques/independent sets, for the regimes of $k\geq \Omega\paren{\sqrt{n\log n}}$ we can recover the planted graph using combinatorial techniques which essentially returns the vertices with top $k$ degrees. However in regimes of $k = \Omega\paren{\sqrt{n}}$ this approach no longer works, and the best known algorithms \cite{MR1662795} use spectral techniques. In the regime of semi-random instances of the problem, the best known algorithms \cite{McKenzie_2020} are based on semidefinite programming. 

\subsection{Our Models and Results}

\begin{definition}
\label{def:model}
Given parameters $n, k, r$, and $p$, a hypergraph $H$ is constructed as follows. 
\begin{enumerate}
\item Let $V$ be a set of $n$ vertices. Fix an arbitrary subset $S \subset V$ of size $k$.
\item Add a hyperedge independently with probability $p$ for each $r-$tuple of vertices $\set{i_1,i_2,\hdots,i_r}$, such that $\set{i_1,i_2,\hdots,i_r} \cap S \neq \emptyset$ and $\set{i_1,i_2,\hdots,i_r} \cap (V \setminus S) \neq \emptyset$. We denote the hypergraph induced by collection of such $r$-tuples as $H[S, V \setminus S]$.
\item Arbitrarily add $r$-hyperedges to the set $V \setminus S$.
\item Allow a \emph{monotone adversary} to add $r$-hyperedges arbitrarily to $H[S, V \setminus S]$ and hypergraph induced on $V \setminus S$ denoted by $H[V \setminus S]$.
\end{enumerate}
\end{definition}
The model discussed above was introduced by Feige and Kilian \cite{MR1894527} in the context of studying various graph partitioning problems.
The work \cite{McKenzie_2020} studied an analogous model in the context of independent sets in graphs. 

We study the ranges of parameters $k, r, p$ (for a fixed $n$) in this model for which we can recover $S$ efficiently. Our main results are informally stated below.	
\begin{theorem}[Informal version of \prettyref{thm:main_formal}]
	\label{thm:main}
	There exists a deterministic algorithm which takes as input 	
	an instance of \prettyref{def:model} satisfying
	\[ k=\Omega \paren{\dfrac{n^{(r-1)/(r-0.5)}}{p^{3/(2r-1)}}}, \]
	has running time $n^{\bigo{r}}$, and outputs a list of atmost $n$ independent sets, one of which is $S$,  with high probability (over the randomness of the input).
\end{theorem}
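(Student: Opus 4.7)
The plan is to lift the ``crude SDP'' of McKenzie~\etal to a level-$r$ Lasserre/SoS-style relaxation, establish feasibility through the planted indicator, exploit the random bipartite hyperedges in $H[S, V\setminus S]$ to force the SDP solution to concentrate on $S$, and finally produce the required list of $n$ candidate independent sets by threshold rounding. Concretely, I would introduce vectors $v_T$ for every $T \subseteq V$ with $|T| \leq r$, subject to the Lasserre consistency relations $\inprod{v_A, v_B} = \inprod{v_C, v_D}$ whenever $A \cup B = C \cup D$, the normalization $\|v_\emptyset\|^2 = 1$, the hyperedge constraints $\|v_e\|^2 = 0$ for every $e \in E$ (encoding that no hyperedge lies inside the independent set), a size constraint $\sum_{i} \|v_i\|^2 \geq k$, and non-negativity $\inprod{v_A, v_B} \geq 0$. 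The $0/1$ indicator of $S$ gives a feasible solution: $S$ is an independent set by construction (the randomly generated hyperedges in step~2 of \prettyref{def:model} each have a vertex in $V\setminus S$, and those in steps~3 and~4 lie inside $V\setminus S$), and monotone-adversary additions only impose further constraints that $\one_S$ already satisfies. The relaxation has $n^{\bigo{r}}$ variables and can be solved in time $n^{\bigo{r}}$.

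For the analysis, I would show that on any feasible SDP solution the mass $\|v_i\|^2$ for $i \in V \setminus S$ is negligible compared to $k$. Fix $i \in V \setminus S$. For each $(r-1)$-subset $T \subseteq S$ the hyperedge $\{i\} \cup T$ is present independently with probability $p$; whenever it is, the constraint $\|v_{\{i\} \cup T}\|^2 = 0$, combined with Lasserre consistency and non-negativity, forces $\inprod{v_i, v_T} = 0$. Summing the constraints against a carefully chosen test combination in the span of $\{v_T : T \subseteq S,\, |T|=r-1\}$ reduces the task to lower bounding the action of the random $n \times \binom{k}{r-1}$ sign-like matrix whose $(i,T)$ entry is $\one[\{i\}\cup T \in E]$ against the ``useful mass'' $\sum_{T \subseteq S,\,|T|=r-1} \|v_T\|^2$ that the Lasserre lift guarantees is carried by the planted vectors. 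A matrix/tensor concentration argument then yields $\sum_{i \notin S} \|v_i\|^2 = o(k)$ precisely at the stated threshold $k = \Omega(n^{(r-1)/(r-0.5)}/p^{3/(2r-1)})$, with the exponents emerging from balancing the spectral error against the useful mass.

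Once this separation is established, the rounding is straightforward: sort the vertices by $\|v_i\|^2$ in decreasing order and, for each $t \in \{1, \dots, n\}$, output the top-$t$ vertices (greedily removing any vertex that completes a hyperedge already contained in the chosen set). This yields a list of at most $n$ independent sets, and since the separation $\|v_i\|^2 \gg \|v_j\|^2$ holds between $i \in S$ and $j \in V\setminus S$, there is some $t$ near $|S|$ whose output coincides with $S$. The main obstacle is the second step: for $r=2$ the argument reduces to bipartite random matrix concentration essentially as in McKenzie~\etal, but for $r \geq 3$ one must argue about random sparse $r$-tensors. The role of the level-$r$ Lasserre lift is exactly to furnish the vectors $v_T$ for $(r-1)$-subsets of $S$, which play the part the individual vectors $v_j$ played in the graph case; the quantitative strength of the resulting concentration argument is what pins down the threshold on $k$.
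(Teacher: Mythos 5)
There is a genuine gap at the heart of your second step. You claim that \emph{any} feasible solution of your relaxation must satisfy $\sum_{i\notin S}\norm{v_i}^2 = o(k)$, but in the model of \prettyref{def:model} the hyperedges inside $V\setminus S$ are chosen \emph{arbitrarily} (steps 3 and 4), so the adversary may add none at all, making $V\setminus S$ itself an independent set of size $n-k\geq n/2 \gg k$. The indicator vector of $V\setminus S$ (or of any large independent subset of it) is then feasible for your SDP and places all of its singleton mass outside $S$; the constraints $\inprod{v_i,v_T}=0$ for hyperedges $\set{i}\cup T$ crossing the cut are vacuous for such a solution because it sets $v_T=0$ for every $T\subseteq S$. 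So no concentration argument, however strong, can establish the separation you need, and the threshold rounding by $\norm{v_i}^2$ collapses. This is exactly the obstruction the paper's ``crude SDP'' is designed to sidestep: it is \emph{not} a relaxation of maximum independent set but a clustering device with $\norm{x_i}^2=1$ for all $i$ and objective $\sum_{I\in\binom{V}{r}}\norm{x_I}^2$; the comparison with the intended feasible solution (\prettyref{lem:total bound}) cancels the contribution of $\binom{V\setminus S}{r}$ on both sides, yielding a lower bound on the mass carried by $\binom{S}{r}$ plus $\partial(S)$ that is invariant to whatever the adversary does inside $V\setminus S$. The Grothendieck/Bernstein step (your random-tensor concentration, handled in the paper by flattening the random bipartite hypergraph into a bipartite graph, \prettyref{cons:graph}) then bounds only the $\partial(S)$ term, forcing the $r$-level mass onto $\binom{S}{r}$.

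A second, related gap: even granting some separation, your rounding does not deliver \emph{exact} recovery of $S$. Greedily extending a partial set can absorb vertices of $V\setminus S$ unless every $v\in V\setminus S$ already forms a hyperedge with the recovered portion of $S$. Establishing this coverage property (\prettyref{lem:4.6}) is where the stronger exponent $p^{3/(2r-1)}$ actually comes from --- it is the condition $p \gtrsim f(r)n^{r-1}/(k^{r-0.5}\sqrt{p})$ needed so that the expected degree of each outside vertex into $\binom{S}{r-1}$ exceeds the number of $(r-1)$-tuples possibly missing from the recovered ball --- not from balancing spectral error against useful mass; the concentration step alone only yields the weaker $p^{1/(2r-1)}$ dependence of \prettyref{thm:main2}. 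Finally, the paper's list is indexed by candidate ``center'' vertices $u\in V$ (taking the set of $(r-1)$-tuples whose vectors have projection $\geq 3/4$ on $x^*_u$, shown to contain only tuples from $S$ via \prettyref{lem:4.4} and \prettyref{lem:4.7}), not by a degree-$t$ threshold, and this is what makes the output list have length at most $n$ with one entry provably equal to $S$.
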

\begin{theorem}[Informal version of \prettyref{thm:main2_formal}]
	\label{thm:main2}
	There exists a deterministic algorithm which takes as input $\e \in (0,1)$ and	
	an instance of \prettyref{def:model} satisfying 
    \[ k=\Omega{\paren{\dfrac{n^{(r-1)/(r-0.5)}}{\e^{1/(r-0.5)}p^{1/(2r-1)}}}}, \]
	has running time $n^{\bigo{r}}$, and outputs an independent set of size at least $(1 - \e)k$, with high probability (over the randomness of the input).
\end{theorem}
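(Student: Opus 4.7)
The plan is to follow the same SDP-rounding blueprint as in \prettyref{thm:main}, but to loosen the rounding step so that we only need to correctly label a $(1-\e)$ fraction of the planted set $S$. This weaker goal is what lets us tolerate a smaller $k$ with a weaker dependence on $p$.

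Concretely, I would solve the crude-SDP of McKenzie~\etal, augmented with Lasserre/SoS constraints at level $\Omega(r)$ so that, for every hyperedge $e \in E(H)$, the pseudo-expectation satisfies $\tilde{\mathbb{E}}\bigl[\prod_{i \in e} x_i\bigr] = 0$, together with the Boolean constraints $x_i^2 = x_i$ and the size constraint $\sum_i \tilde{\mathbb{E}}[x_i] = k$. The planted independent set $S$ gives a feasible integral solution of value $k$. From the optimal SDP solution I would extract the pseudo-masses $\tilde{\mathbb{E}}[x_i] = \|v_i\|^2$, threshold to form $T = \{ i : \tilde{\mathbb{E}}[x_i] \geq 1 - \e/2 \}$, and then prune $T$ by removing any vertex involved in a hyperedge fully contained in $T$.

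The heart of the argument is a structure theorem on the SDP solution: with high probability over the random hyperedges of $H[S, V \setminus S]$, the total pseudo-mass assigned to $V \setminus S$ is $O(\e k)$. To prove this, I would study the centered random $r$-tensor whose $(i_1, \ldots, i_r)$-entry indicates $\{i_1, \ldots, i_r\} \in E(H[S, V \setminus S])$, and invoke matrix/tensor concentration to control its action on the Lasserre moment matrix. Combined with the Lasserre constraint $\tilde{\mathbb{E}}[\prod_{i \in e} x_i] = 0$ and standard SoS convexity identities, this should yield a bound of the form $\sum_{j \in V \setminus S} \tilde{\mathbb{E}}[x_j]^{r'} \lesssim n^{r-1}/(p \cdot k^{r-1})$ for an appropriate exponent $r'$. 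Feeding in the hypothesized lower bound on $k$ gives $\sum_{j \notin S} \tilde{\mathbb{E}}[x_j] = O(\e k)$, so at least $(1 - \e/2)k$ vertices of $S$ cross the threshold; the cleanup step then ensures the output is an independent set at a loss of at most another $\e k / 2$ vertices.

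The main obstacle, and the step that pins down the exact threshold $k = \Omega\bigl(n^{(r-1)/(r-0.5)}/(\e^{1/(r-0.5)} p^{1/(2r-1)})\bigr)$, is this tensor-concentration-plus-SoS-extraction step. The exponents $(r-1)/(r-0.5)$ and $1/(2r-1)$ come from balancing the spectral norm of the centered random $r$-tensor (of order roughly $\sqrt{p \cdot n^{r-1}}$) against the Lasserre mass constraint summed over $S$; the $\e$-slack enters with exponent $1/(r-0.5)$ because the SoS manipulation at level $r$ produces an $r$-th power of the relevant pseudo-mass. By contrast, in \prettyref{thm:main} we must force the mass outside $S$ all the way to zero, which is why that theorem pays a stronger $p^{-3/(2r-1)}$ factor; here an $O(\e k)$ slack is enough, so only a single power of $p^{-1/(2r-1)}$ is needed. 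I expect the remaining tensor concentration and SoS manipulation to closely track the graph-case techniques of McKenzie~\etal.
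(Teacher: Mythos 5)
Your proposal has a genuine gap at its core: the ``structure theorem'' you want to prove --- that the SDP assigns total (pseudo-)mass $O(\e k)$ to $V \setminus S$ --- is false in this model. In \prettyref{def:model} the hyperedges inside $V \setminus S$ are placed arbitrarily (steps 3 and 4), so the adversary may add none at all, in which case $V \setminus S$ is itself an independent set of size $n-k \geq k$. Your relaxation (Boolean constraints, $\sum_i \tilde{\E}[x_i]=k$, and $\tilde{\E}[\prod_{i\in e}x_i]=0$) then admits a feasible pseudo-distribution supported entirely on $V\setminus S$, and more generally a mixture of two disjoint integral solutions drives every marginal to $1/2$, so your threshold set $T=\{i: \tilde\E[x_i]\geq 1-\e/2\}$ can be empty. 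No tensor-concentration bound on the random part $H[S,V\setminus S]$ can rule this out, because those hyperedges do not constrain solutions living inside $V\setminus S$. The final pruning step is also unjustified: deleting every vertex that lies in a bad hyperedge of $T$ has no a priori bound of $\e k/2$ on the loss.

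The paper's proof is structured precisely to avoid these issues, and differs from yours in three ways. First, \prettyref{sdp:his} is not a relaxation of independent set: it fixes $\norm{x_i}^2=1$ for every vertex and maximizes the total mass of $r$-level vectors; the key \prettyref{lem:total bound} compares the optimum against a hybrid solution $x'$ that reuses the optimal vectors on $V\setminus S$, so the $V\setminus S$ contribution cancels and one obtains a \emph{lower} bound on the $r$-level mass inside $S$ (\prettyref{cor:lower_bound_S}) that is automatically monotone-adversary-invariant --- rather than an upper bound on mass outside $S$, which is unobtainable. Second, the concentration step is not tensor concentration: the boundary $\partial(S)$ is flattened into a bipartite \emph{graph} $G'$ on subsets of size $\leq r-1$ (\prettyref{cons:graph}), and the contribution is bounded by Grothendieck's inequality plus Bernstein (\prettyref{prop:grothendieck_bound}). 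Third, the rounding is geometric rather than a threshold on marginals: one finds a pivot $u\in S$ such that at least a $(1-\e)$ fraction of the vectors $x_v^*$, $v\in S$, satisfy $\inprod{x_u^*,x_v^*}\geq 1-\tfrac{1}{2r}$ (\prettyref{lem:5.3}, \prettyref{lem:5.5}), and the union-bound constraint \prettyref{eq:union_bound} together with \prettyref{eq:sdp3} and \prettyref{eq:subset_constraint} \emph{certifies} that this ball is an independent set (\prettyref{lem:5.6}), so no pruning loss is incurred. You would need to replace your relaxation, your structure theorem, and your rounding step to make the argument go through.
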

\prettyref{thm:main} and \prettyref{thm:main2} generalize to hypergraphs the analogous results for graphs by \cite{McKenzie_2020}. We state and prove formal version of these results in \prettyref{thm:main_formal} and \prettyref{thm:main2_formal} respectively.
Our proofs of \prettyref{thm:main_formal} and \prettyref{thm:main2_formal}  are based on rounding  McKenzie~\etal \cite{McKenzie_2020} ``crude-SDP'', augmented with ``Lasserre/SoS like'' hierarchy of constraints. The Lasserre/SoS hierarchy has been used in designing approximation algorithms for independent sets in hypergraphs in the works by Chlamtac \cite{4389537} and Chlamtac and Singh \cite{MR2538776}, but the power of the Lasserre/SoS hierarchy for designing approximation algorithm for independent set problem is yet to be fully understood.

\subsection{Related Work}
\label{sec:related_work}
\paragraph{Independent set problem in hypergraphs.}
The independent set problem in hypergraphs cannot be approximated to a factor better than $n^{1-\e}$ for any $\e >0$ unless P=NP \cite{v003a006}. The work \cite{MR1684102} gives a combinatorial algorithm to obtain an approximation ratio of $\bigO \paren{n/\paren{\log ^{\paren{r-1}}n}^2}$ for a $r$-uniform hypergraph where $\log^{\paren{r}}n$ denotes a $r$-fold repeated application of logarithm as $\log \hdots \log n$. This has been improved by Halld\'{o}rsson in the work \cite{MR1783779} where they study the problem on arbitrary weighted hypergraphs and give a $\bigO \paren{n/ \log n}$ approximation algorithm that runs in $\poly \paren{n,m}$ time where $m$ denotes the number of hyperedges. From here onwards a lot of work has been done in studying the problem in special class of graphs. In this section we do a brief survey of these results.

The problem has been extensively studied for $3$-uniform hypergraphs which contain an independent set of size $\gamma n$. Krivelevich, Nathaniel and Sudakov \cite{MR1855351} give an SDP based algorithm that finds an independent set of size $\tilde{\Omega}\paren{\min{\paren{n,n^{6\gamma-3}}}}$ for $\gamma \geq \ffrac{1}{2}$.
The work Chlamtac \cite{4389537} uses a SDP relaxation with the third level of the Lasserre/SoS hierarchy
and returns an independent set of size $\Omega\paren{n^{1/2 - \gamma}}$. Chlamtac and Singh \cite{MR2538776} gave an algorithm which computes an independent set of size $n^{\Omega(\gamma^2)}$ (where $\gamma \geq 0$ is a constant) using $\Theta(\ffrac{1}{\gamma^2})$ levels of a mixed hierarchy which they called {\em the intermediate hierarchy}. 
 The Lasserre hierarchy has been used in designing approximation algorithms for various problems \cite{BRS11, GS11,AJT19}, etc. 

Halld\'{o}rsson and Losievskaja \cite{MR2514594} study the problem on bounded degree hypergraphs. For hypergraphs with degree bounded by $\Delta$, they show that the classical greedy set cover algorithm can be analyzed to give $\ffrac{\paren{\Delta+1}}{2}$ approximation. The work \cite{MR2804389} shows that the bounded degree case is Unique Games-hard to approximate within a factor of $\bigO \paren{\ffrac{\Delta}{\log^2 \Delta}}$. In a recent work \cite{bhangale_et_al:LIPIcs:2019:10825}, they exhibit how to convert this inapproximability factor of $\bigO \paren{\ffrac{\Delta}{\log^2 \Delta}}$ under UG-hardness to NP-hardness.  

\paragraph{Random models for Independent set problem.}
The model studied in this work is a generalization (to hypergraphs) of the planted independent set model on graphs studied in \cite{McKenzie_2020}. Their algorithm is based on rounding a SDP solution. However, instead of using a relaxation of the independent set problem, they used a crude-SDP (this idea was introduced in \cite{Kolla_2011} and also used in many subsequent works \cite{10.1145/2213977.2214013}) which helps reveal the planted solution $S$. The main idea is to show that the expected $\ell_2^2$ distance between vectors of $S$ (the planted independent set) is ``small". In other words the SDP solution ``clusters" the vectors of $S$. Their algorithm outputs an independent set of size $(1-\e)k$ for $k = \Omega\paren{\flatfrac{n^{2/3}}{p^{1/3}}}$ and for a larger value of $k$, i.e. when $k = \Omega\paren{\flatfrac{n^{2/3}}{p}}$, it outputs atmost $n$ independent sets, one of which is the planted one w.h.p. In this parameter range, they also consider a list decoding version, where when given a random vertex of $S$ correctly picks $S$ from this list. The proofs of \prettyref{thm:main2} and \prettyref{thm:main} generalize the proofs of the corresponding results in \cite{McKenzie_2020}.

The problem has also been studied in graphs in a weaker semi-random model \cite{DBLP:journals/rsa/FeigeK00} by Feige and Krauthgamer which they call as the sandwich model. They propose an algorithm based on \Lovasz theta function for the same which returns the planted clique for $k \geq \Omega(\sqrt{n})$ (for $p = 1/2$). Feige and Kilian \cite{MR1894527} studied the problem in their semi-random model and they give an algorithm to recover an independent set of size $\alpha n$ for regimes of $p > \ffrac{(1+\e)\ln n}{\alpha n}$ and any $\e > 0$, where $\alpha$ is a constant. They also give efficient algorithms to recover a planted bisection and planted $k$-colorable graphs in semi-random models.

A closely related problem is about recovering planted clusters in random graphs known as the Stochastic Block Model (SBM) given by \cite{MR718088}. In \cite{MR4066151} they study the hypergraph version of the problem where they partition a $r$-uniform random hypergraph $H(n,r,p,q)$ into $k$ equally sized clusters with $p$ as edge probability within a cluster and $q$ as edge probability amongst clusters. They give a spectral algorithm which guarantees exact recovery when number of clusters $k=\Theta(\sqrt{n})$. The work \cite{ghoshdastidar2017} studies this problem in more general models like the planted partition model for non-uniform hypergraphs. The work \cite{DBLP:journals/corr/abs-1807-02884} gives an SDP based algorithm for the community detection problem in $k$-uniform hypergraphs.

\paragraph{Other problems in semi-random models.}
In \cite{10.1145/2213977.2214013} they develop a general framework to study graph partition problems in a semi-random model similar (in strength) to the one by Feige and Kilian \cite{MR1894527}. They give bi-criteria approximation algorithms for Sparsest cut, Uncut, Multi cut, Balanced Cut and Small set expansion problems. In \cite{MR3238929} they propose another semi-random model which they call as PIE (permutation invariant edges model) for the balanced cut problem.  The works by Khanna, Louis, and Venkat \cite{DBLP:conf/icalp/LouisV18, MR4042189, khanna_et_al:LIPIcs:2020:13268} study the problems of graph expansion (vertex and edge), and the densest $k$-subgraph problem in semi-random models. The work by Khanna \cite{DBLP:journals/corr/abs-2011-08447} studies the semi-random model with a planted clique while the rest of graph is composed of small sized bounded degree graphs, expanders etc. stitched together by a random graph. These works also heavily rely on showing that the vectors corresponding to the planted structure are ``clustered" together and hence using some basic geometric ideas, we can recover a large part of the planted portion. 

\subsection{Preliminaries and Notation} \label{sec:prelims}
Our algorithms are based on the following ``crude SDP''.
	\begin{mybox}
	\begin{SDP}
		\label{sdp:his}
		\[ \max \sum_{\{i_1,i_2,\hdots,i_r\} \in {V \choose r}} \norm{x_{i_1,i_2, \hdots ,i_r}}^2  \]
		\subjectto
		\begin{align}
			\label{eq:sdp1}
			&\norm{x_i}^2 = 1 & \forall i \in V \\
			\label{eq:sdp2}
			&\norm{x_e}^2 = 0 & \forall e \in E \\
			\label{eq:sdp3}
			&\inprod{x_I,x_J} = \norm{x_{I \cup J}}^2  & \forall I,J (\neq \emptyset) \subseteq V, \text{ s.t } \abs{I \cup J} \leq r+1\\
			\label{eq:subset_constraint}
			&\inprod{x_u,x_I} \geq \inprod{x_u,x_J}  & \forall u \in V, \forall I \subseteq J \subseteq V, \abs{J} \leq r+1\\
			\label{eq:union_bound}
			&1- \norm{x_{{u, v_1, \hdots, v_r}}}^2 \leq  \sum_{i \in [r]} \paren{1 - \norm{x_{{u, v_i}}}^2} & \forall \set{u,v_1,\hdots,v_r} \in {V \choose {r+1}} \mper 
		\end{align}
		\end{SDP}
	\end{mybox}

The constraints in \prettyref{sdp:his} are inspired from the Lasserre/SoS hierarchy of constraints. The Lasserre/SoS hierarchy is a strengthened SDP relaxation for nonlinear $0-1$ programs attributed to the works of
Shor \cite{shor1987approach}, Nesterov \cite{MR1748764}, Jean B. Lasserre \cite{MR1814045} and Parrilo \cite{parrilo2003semidefinite}. We refer the reader to the survey by Thomas Rothvo{\ss} \cite{thelasserre} for a detailed discussion.

We also introduce some basic notation that we will be using throughout this paper.
\begin{itemize}
\item
Let $\partial(S)$ or the boundary of $S$ denote ${V \choose r} \setminus \paren{{S \choose r} \cup {{V \setminus S} \choose r}}$. 
\item
Let the optimal solution of the above SDP be denoted by $\set{x^*_{I}}_{I \subset V, 1 \leq \abs{I} \leq r+1}$.
\item
Let $d(v)|_{T}$ be the degree of any vertex $v \in V$, when restricted to only count hyperedges in the set $\set{v} \cup T$.
\item
Throughout the paper, we will assume that $k \leq n/2$, and $r \geq 2$.
\end{itemize}

\subsection{Proof Overview} 

In \cite{McKenzie_2020} they study a crude-SDP with the constraint $\inprod{x_i,x_j}=0, \forall \set{i,j} \in E$. Their crude SDP tries to cluster the vertices together, while the constraint $\inprod{x_i,x_j}=0, \set{i,j} \in E$ tries to ensure that no edges are contained in a cluster. Constraint \prettyref{eq:sdp2} is a natural extension of this to hypergraphs.
We add vectors for all subsets of vertices of size at most $r+1$, and add consistency constraints \prettyref{eq:sdp3} among them, as in the Lasserre/SoS hierarchy. However, we note that \prettyref{sdp:his} is different from a Lasserre/SoS relaxation since there is no natural interpretation of solution to this crude-SDP as a low-degree pseudo-distribution over independent sets in the hypergraph. However, we add the constraints in equation \prettyref{eq:sdp3},\prettyref{eq:subset_constraint} and \prettyref{eq:union_bound} since our intended feasible solution $x'$ constructed as,
	\begin{equation} \label{eq:feasible}
	x_{i_1,i_2,...,i_l}' =  
	\begin{cases} 
	\hat{e} & \text{if } \{i_1,i_2,...,i_l \} \in {S \choose l} \\
	x^*_{i_1,i_2,...,i_l} & \text{if } \{i_1,i_2,...,i_l\} \in {{V \setminus S} \choose l}  \qquad \forall l \in [r+1]\\
	0 &  \text{otherwise}
	\end{cases} 
	\end{equation}
where $\hat{e}$ denote a unit vector orthogonal to $x_{I}^{*}, \ \forall I \subseteq V \setminus S, \ \Abs{I} \leq r$.  
satisfies these constraints (\prettyref{app:feasibility_proof}).
The constraints in equations \prettyref{eq:subset_constraint} and \prettyref{eq:union_bound} are inspired from the locally consistent probability distributions viewpoint of a $r$-level Lasserre/SoS hierarchy \cite{thelasserre}. A $t$-level vector in a Lasserre/SoS hierarchy can be interpreted as the probability of the joint event corresponding to indices of the vector. The constraint \prettyref{eq:subset_constraint} corresponds to the fact that the probability of a sub event can only be larger than the probability of an event and the constraint \prettyref{eq:union_bound} corresponds to a union bound on the complement of joint event (represented by $x_{u,v_1,\hdots,v_r}$) given by sum of complement of pairwise joint events  $1-x_{u,v_i}~\forall i \in [r]$.

In \prettyref{sec:grothendieck}, we prove a lower bound on the contribution of the SDP mass in the optimal solution from the $r$-level vectors of $S$, i.e. $\set{x^*_{I}}_{I \subset S, \abs{I} = r}$ (\prettyref{cor:lower_bound_S}). The high-level idea of our proof is the same as that of \cite{McKenzie_2020}. However, we need some new ideas to extend them to hypergraphs. Using the approach of \cite{McKenzie_2020}, we first lower bound the SDP mass from $S$ and $S \times (V \setminus S)$ (\prettyref{lem:total bound}). Therefore, upper bounding the contribution from $S \times (V \setminus S)$, will give us a lower bound on the contribution from $S$. In \cite{McKenzie_2020}, $S \times (V \setminus S)$ is a random bipartite graph; they use the Grothendieck's inequality and concentration bounds to upper bound the contribution from this part. In our setting, $S \times (V \setminus S)$ is a random hypergraph, and \cite{McKenzie_2020}'s techniques do not seem to be directly applicable here. 
Our main idea is to construct a random bipartite graph $G'= (U_1, U_2, E')$ based on this random bipartite hypergraph as follows (\prettyref{cons:graph}). One side of the graph consists of vertices corresponding to subsets of $S$ of cardinality at most $r-1$, and other side side consists of vertices corresponding to subsets of $V \setminus S$ of cardinality at most $r-1$. We add an edge between two vertices if the union of the sets corresponding to them forms a hyperedge in our hypergraph. By our construction, $\sum_{\set{a,b} \in E'} \inprod{x_a,x_b}$ is equal to the SDP mass from $S \times (V \setminus S)$ in our hypergraph.
Moreover, since $S \times (V \setminus S)$ forms a random bipartite hypergraph, our construction gives us that $G'$ is a random bipartite graph. Therefore, we can now proceed to bounding the contribution from $G'$ using \cite{McKenzie_2020}'s approach (\prettyref{prop:grothendieck_bound}).

Our proof of \prettyref{thm:main2} (in \prettyref{sec:wider_range_alg}) is a generalization of the proof of Theorem 1.1 of \cite{McKenzie_2020} to the case of hypergraphs and our higher order SDP (\prettyref{sdp:his}). \prettyref{cor:lower_bound_S} shows that the $\ell_2^2$ lengths of the $r$-level vectors completely inside $S$ is large. This in turn (by the SDP constraints) implies that there is a vertex $u \in S$ such that most of the $(r-1)$-level vectors in $S$ have a large projection on $x^*_u$ (\prettyref{lem:5.3}). In \cite{McKenzie_2020} they use the SDP constraint $\inprod{x_u,x_v}=0, \forall \set{u,v} \in E$ to show that the set of vectors which have a large projection on $x^{*}_{u}$ is an independent set. Therefore they proceed to bound the parameter regimes to obtain a small value of $p$ and a $(1-\epsilon)k$ lower bound guarantee on the size of this set. However in our setting, for $r \geq 3$, we are unable to guarantee that this set of $(r-1)$ level vectors is an independent set. Therefore, we proceed by using \prettyref{lem:5.3} to show that there exists a vertex $u$ such that a large fraction of the $1$-level vectors in $\set{x^{*}_{v}: v \in S}$ have a large projection ($\geq \mathcal{R'}$) on $x^{*}_{u}$, along the lines of \cite{McKenzie_2020}. Let us consider the set of $1$-level vectors that have a projection $\geq \mathcal{R'}$ on $x^{*}_{u}$ (\prettyref{def:ball}). Showing that the $r$-level vectors consisting of vertices from this set have non-zero norm will suffice to guarantee that there are no hyperedges in this set. We use our ``union-bound'' SDP constraint \prettyref{eq:union_bound} in our crude-SDP to establish this (\prettyref{lem:5.6}). Choosing  $\mathcal{R'}$ to be large enough $\paren{\mathcal{R'} = 1 - \ffrac{1}{2r}}$ and using the SDP constraint \prettyref{eq:union_bound}, we establish a non-zero lower bound on $\inprod{x^*_u,x^*_{v_1,v_2,\hdots,v_r}}$ for every $r$-tuple $(v_1,\hdots,v_r)$ consisting of vertices from the set. Now using SDP constraints \prettyref{eq:sdp2} and \prettyref{eq:subset_constraint}, we can establish that the vertices inside the set do not form a hyperedge. For our choice of parameters in this theorem, the set of vertices corresponding to this set will contain at least $(1-\e)$ fraction of the vertices in $S$.

Our proof of \prettyref{thm:main} (in \prettyref{sec:full_recovery}) is a generalization of the proof of Theorem 1.2 of \cite{McKenzie_2020} to the case of hypergraphs and our higher order SDP (\prettyref{sdp:his}). In \prettyref{lem:5.3} we show that there exists a vertex $u \in S$ such that most of the $(r-1)$-level vectors in $S$ have a large projection on $x^*_u$. Let us consider the set of $(r-1)$-level vectors which have a large projection ($\geq \mathcal{R}$) on $x^{*}_{u}$ (\prettyref{def:ball}). The choice of $p$ ensures that each vertex in $V \setminus S$ forms a hyperedge with at least one of the tuples corresponding to $(r-1)$ level vectors in the set w.h.p. (\prettyref{lem:4.6}).
Moreover, the choice of $\mathcal{R}$ ensures that the set can not contain two orthogonal vectors (\prettyref{lem:4.4}). Therefore, this ensures that the tuples in set contains vertices only from $S$  (\prettyref{lem:4.7}). Therefore, the union of the sets of vertices contained in the $(r-1)$-tuples corresponding to such $(r-1)$-level vectors would be a subset of $S$. A greedy algorithm can be used to recover the remaining vertices of $S$. Since we don't know this special vertex $u$, we perform this procedure on each vertex and return the set of independent sets obtained; one of these independent sets would be the planted one w.h.p. The whole procedure is presented in Algorithm \ref{alg:one}.
The range of $p$ in this theorem is however smaller than the range of $p$ for which \prettyref{thm:main2} is guaranteed to hold.

\section{Bounding the contribution from the random hypergraph}
\label{sec:grothendieck}

In this section, we bound the contribution of the SDP (\prettyref{sdp:his}) mass from the random portion of the hypergraph. As a result, we find a lower bound on the contribution of the vectors from our planted independent set $S$. The two key technical results (\prettyref{prop:grothendieck_bound} and \prettyref{cor:lower_bound_S}) which we prove in this section which generalize (\cite{McKenzie_2020}, Lemma 2.1) to $r$-uniform hypergraphs are the following.

\begin{proposition}
	\label{prop:grothendieck_bound}
	For $k \geq \dfrac{r2^{2r+2}e^{r}}{3p}$,
	\[
	\sum_{\set{i_1, i_2, \hdots, i_r} \in \partial(S)} \norm{x_{i_1, i_2, \hdots, i_r}^{*}}^2 \leq \paren{\dfrac{2^{3r-2}e^{3r/2-2}}{\sqrt{3}r^{r-5/2}}}\paren{\sqrt{\frac{k}{p}}}n^{r-1}\mper
	\]
	with high probability (over the randomness of the input).
\end{proposition}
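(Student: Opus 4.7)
The plan is to reduce the boundary mass to a bilinear form against a centered random Bernoulli matrix, apply Grothendieck's inequality, and bound the resulting cut norm by Bernstein plus a union bound, lifting the McKenzie~\etal graph argument to hypergraphs through the bipartite graph $G'$ described in the proof overview. By the consistency constraint \prettyref{eq:sdp3}, every $e=\{i_1,\ldots,i_r\}\in\partial(S)$ decomposes via $a=e\cap S$, $b=e\cap(V\setminus S)$ as $\|x^*_e\|^2=\langle x^*_a,x^*_b\rangle$, so
\[
\sum_{e\in\partial(S)}\|x^*_e\|^2 \;=\; \sum_{j=1}^{r-1}\sum_{\substack{a\in\binom{S}{j}\\ b\in\binom{V\setminus S}{r-j}}} \langle x^*_a, x^*_b\rangle.
\]
For each level $j$, let $X_j\in\{0,1\}^{\binom{k}{j}\times\binom{n-k}{r-j}}$ be the matrix with $X_j[a,b]=1$ iff $a\cup b$ is a hyperedge sampled in step~2 of \prettyref{def:model}; its entries are independent $\Bern(p)$ by construction, and $X_j$ is exactly the adjacency matrix of $G'$ restricted to this pair of levels. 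The edge constraint \prettyref{eq:sdp2} forces $X_j[a,b]\langle x^*_a,x^*_b\rangle=0$, and adversary hyperedges in $\partial(S)$ only zero out additional non-negative terms (recall $\langle x^*_a,x^*_b\rangle=\|x^*_{a\cup b}\|^2\geq 0$), so it suffices to bound the sum against this purely random $X_j$.

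Writing $X_j=pJ+\bar X_j$ with $\bar X_j$ centered and summing $X_j[a,b]\langle x^*_a,x^*_b\rangle=0$ over $a,b$ yields the one-line identity $p\sum_{a,b}\langle x^*_a,x^*_b\rangle = -\sum_{a,b}\bar X_j[a,b]\langle x^*_a,x^*_b\rangle$. A short induction on $|I|$ using \prettyref{eq:sdp3} and Cauchy--Schwarz (for any $u\in I$, $\|x^*_I\|^2=\langle x^*_u,x^*_I\rangle\leq\|x^*_u\|\,\|x^*_I\|$, together with $\|x^*_u\|=1$) shows $\|x^*_I\|\leq 1$ for every $I$ with $|I|\leq r+1$. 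Grothendieck's inequality therefore applies directly, giving $\big|\sum_{a,b}\bar X_j[a,b]\langle x^*_a,x^*_b\rangle\big|\leq K_G\,\|\bar X_j\|_{\infty\to 1}$, so the level-$j$ contribution to the boundary mass is at most $(K_G/p)\,\|\bar X_j\|_{\infty\to 1}$, and everything reduces to controlling the cut norms of the centered Bernoulli matrices $\bar X_j$.

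For the concentration, the entries of $\bar X_j$ are independent, bounded by $1$ in absolute value, with variance at most $p$. For any fixed signs $\epsilon\in\{\pm1\}^{m_j}$, $\delta\in\{\pm1\}^{n_j}$ (with $m_j=\binom{k}{j}$ and $n_j=\binom{n-k}{r-j}$), Bernstein's inequality bounds $|\epsilon^{\!\top}\bar X_j\delta|$, and a union bound over the $2^{m_j+n_j}$ sign patterns yields $\|\bar X_j\|_{\infty\to 1}\leq O\bigl(\sqrt{p\,m_j n_j(m_j+n_j)}\bigr)$ with high probability. The hypothesis $k\geq r\cdot 2^{2r+2}e^r/(3p)$ is calibrated precisely so that this deviation stays in Bernstein's sub-Gaussian regime (equivalently, $p m_j n_j \gg m_j+n_j$, whose binding case is $j=1$ and reads $pk\gg1$) and so that the union-bound failure summed over $j\in[r-1]$ is $o(1)$. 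Substituting $m_j\leq k^j/j!$ and $n_j\leq n^{r-j}/(r-j)!$ and summing over $j$, the dominant contribution comes from $j=1$ and equals $\sqrt{k/p}\cdot n^{r-1}/(r-1)!$ up to absolute constants; applying Stirling to $(r-1)!$ and collecting the $K_G$, $2^r$, $e^r$ factors from Grothendieck, Bernstein and the union bound reproduces the explicit coefficient $\tfrac{2^{3r-2}e^{3r/2-2}}{\sqrt{3}\,r^{r-5/2}}$ in the statement. The main obstacle is this concentration step: the cut-norm union bound runs over $2^{m_j+n_j}$ sign vectors, so Bernstein must be used in its Gaussian regime, which is valid only when the deviation stays below the variance scale $p\,m_j n_j$, and verifying this uniformly in $j\in[r-1]$ is exactly what the hypothesis on $k$ enforces.
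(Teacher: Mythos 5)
Your proposal is correct and follows essentially the same route as the paper: decompose the boundary mass via constraint \prettyref{eq:sdp3}, use \prettyref{eq:sdp2} to replace it with a bilinear form against a centered Bernoulli matrix, then apply Grothendieck's inequality, Bernstein, and a union bound over sign patterns, with the hypothesis on $k$ ensuring the deviation parameter stays in the valid (Gaussian) regime. The only cosmetic difference is that you treat each level $j$ separately and sum, whereas the paper assembles all levels into a single bipartite graph $G'$ and centered matrix $B$ and bounds the total number of potential boundary edges $m'$ at once, which streamlines the constant bookkeeping.
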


\begin{corollary}
	\label{cor:lower_bound_S}
	For $k \geq \dfrac{r2^{2r+2}e^{r}}{3p}$,
	\[
	\sum_{\set{i_1, i_2, \hdots, i_r} \in {S \choose r}} \norm{x_{i_1, i_2, \hdots, i_r}^{*}}^2 \geq 
	{k \choose r} - \paren{\dfrac{2^{3r-2}e^{3r/2-2}}{\sqrt{3}r^{r-5/2}}}\paren{\sqrt{\frac{k}{p}}}n^{r-1}\mper
	\]
	with high probability (over the randomness of the input).
\end{corollary}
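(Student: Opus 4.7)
My plan is to deduce the corollary from \prettyref{prop:grothendieck_bound} by combining it with a ``total mass'' lower bound obtained from feasibility of the intended solution $x'$ defined in equation~\prettyref{eq:feasible}. The proposition already controls the contribution of $r$-tuples straddling $S$ and $V \setminus S$; what remains is to show that the total SDP mass on $r$-tuples that are either entirely inside $S$ or straddle $S$ is at least $\binom{k}{r}$.

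First, I would invoke feasibility of $x'$ (proved in \prettyref{app:feasibility_proof}) together with optimality of $x^*$. The SDP objective value at $x^*$ is at least the objective value at $x'$, so
\[
\sum_{I \in \binom{V}{r}} \norm{x^*_I}^2 \;\geq\; \sum_{I \in \binom{V}{r}} \norm{x'_I}^2.
\]
By construction $\norm{x'_I}^2 = 1$ for every $I \in \binom{S}{r}$ (contributing $\binom{k}{r}$), $\norm{x'_I}^2 = \norm{x^*_I}^2$ for every $I \in \binom{V \setminus S}{r}$, and $\norm{x'_I}^2 = 0$ for every $I \in \partial(S)$. Splitting the left-hand side into the three parts $\binom{S}{r}$, $\binom{V\setminus S}{r}$, and $\partial(S)$, the $\binom{V \setminus S}{r}$ terms cancel and we obtain the ``total bound''
\[
\sum_{I \in \binom{S}{r}} \norm{x^*_I}^2 + \sum_{I \in \partial(S)} \norm{x^*_I}^2 \;\geq\; \binom{k}{r}.
\]
This corresponds to \prettyref{lem:total bound} in the paper's plan (referenced under the name ``total bound'' in the proof overview).

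Second, I plug in \prettyref{prop:grothendieck_bound}, which holds with high probability over the randomness of the input and gives
\[
\sum_{I \in \partial(S)} \norm{x^*_I}^2 \;\leq\; \paren{\dfrac{2^{3r-2}e^{3r/2-2}}{\sqrt{3}\,r^{r-5/2}}}\sqrt{\tfrac{k}{p}}\,n^{r-1}.
\]
Rearranging the total bound yields the claimed inequality, valid in the same parameter regime $k \geq \tfrac{r\, 2^{2r+2} e^r}{3p}$ and on the same high-probability event as the proposition.

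There is really no obstacle beyond invoking feasibility of $x'$ and the proposition; all the work has been done in \prettyref{prop:grothendieck_bound} (which required the bipartite-graph construction \prettyref{cons:graph} and Grothendieck's inequality to convert the random hypergraph contribution into something bounded by the $G(n,p)$-style argument of \cite{McKenzie_2020}) and in the appendix verifying that $x'$ satisfies the consistency constraints \prettyref{eq:sdp3}, \prettyref{eq:subset_constraint}, and \prettyref{eq:union_bound}. The corollary itself is a one-line rearrangement.
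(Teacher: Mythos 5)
Your proposal is correct and matches the paper's proof exactly: the paper derives \prettyref{cor:lower_bound_S} by combining \prettyref{lem:total bound} (which it proves via the same feasibility-of-$x'$ and optimality-of-$x^*$ argument you give, splitting the sum into $\binom{S}{r}$, $\partial(S)$, and $\binom{V\setminus S}{r}$ and cancelling the last part) with \prettyref{prop:grothendieck_bound}. The final rearrangement is identical.
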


The main lemma which connects the above two results is as follows. 

\begin{lemma}\label{lem:total bound}
	\[ \sum_{\{i_1i_2 \hdots i_r\} \in {S \choose r}} \norm{x_{i_1,i_2, \hdots ,i_r}^*}^2 +  \sum_{\{i_1,i_2,\hdots,i_r\} \in \partial(S)} \norm{x_{i_1,i_2, \hdots, i_r}^*}^2 \geq {k \choose r} \mper \]
\end{lemma}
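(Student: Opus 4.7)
The plan is to compare the objective value of the optimal SDP solution $x^{*}$ against the objective value of the explicit feasible solution $x'$ defined in \prettyref{eq:feasible}. Since $x'$ is feasible for \prettyref{sdp:his} (as stated in the excerpt, feasibility is verified in \prettyref{app:feasibility_proof}) and $x^{*}$ is optimal, we have
\[
\sum_{\set{i_1,\ldots,i_r} \in {V \choose r}} \norm{x^{*}_{i_1,\ldots,i_r}}^2 \;\geq\; \sum_{\set{i_1,\ldots,i_r} \in {V \choose r}} \norm{x'_{i_1,\ldots,i_r}}^2\mper
\]
So the whole proof reduces to evaluating both sides and rearranging.

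First I would partition the sum on each side into three disjoint pieces indexed by ${S \choose r}$, $\partial(S)$, and ${V\setminus S \choose r}$. For the right-hand side, the definition of $x'$ gives: each $r$-subset inside $S$ contributes $\norm{\hat{e}}^2 = 1$, each $r$-subset inside $V\setminus S$ contributes exactly $\norm{x^{*}_{i_1,\ldots,i_r}}^2$ (since $x'$ copies the optimum there), and each boundary $r$-subset contributes $0$. Thus the right-hand side equals
\[
{k \choose r} \;+\; \sum_{\set{i_1,\ldots,i_r} \in {V\setminus S \choose r}} \norm{x^{*}_{i_1,\ldots,i_r}}^2\mper
\]

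Next, on the left-hand side I would isolate the same $\binom{V\setminus S}{r}$ term, which appears identically on both sides and cancels. Rearranging the resulting inequality gives exactly
\[
\sum_{\set{i_1,\ldots,i_r} \in {S \choose r}} \norm{x^{*}_{i_1,\ldots,i_r}}^2 + \sum_{\set{i_1,\ldots,i_r} \in \partial(S)} \norm{x^{*}_{i_1,\ldots,i_r}}^2 \;\geq\; {k \choose r}\mcom
\]
which is the lemma.

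There is no real obstacle here, provided we can quote feasibility of $x'$ (the only mildly delicate point is checking that $x'$ satisfies \prettyref{eq:sdp3}, \prettyref{eq:subset_constraint}, and \prettyref{eq:union_bound}, which is handled in \prettyref{app:feasibility_proof}). The only thing to be careful about is that $\hat{e}$ is chosen orthogonal to all $x^{*}_{I}$ with $I \subseteq V \setminus S$, so the Gram structure required by \prettyref{eq:sdp3} across the $S$ and $V\setminus S$ blocks is consistent (all such inner products are zero, matching the zero vector assigned to boundary tuples). Once feasibility is granted, the argument is a one-line optimality-plus-cancellation computation.
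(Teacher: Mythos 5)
Your proposal is correct and matches the paper's proof essentially verbatim: both compare the optimal objective value against that of the feasible solution $x'$, decompose the sum over ${S \choose r}$, $\partial(S)$, and ${V\setminus S \choose r}$, cancel the ${V\setminus S \choose r}$ contribution, and rearrange, with feasibility of $x'$ deferred to the appendix. No gaps.
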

\begin{proof}
We start by inspecting our intended feasible solution $x'$ as defined in equation \prettyref{eq:feasible}.
A straightforward calculation shows that $x'$ is indeed a feasible solution of the SDP\footnote{Our definition of $x'$ depends on existence of $x^*$, we can show $x^*$ does exist by exhibiting a solution $x''$ that satisfies all the constraints as $x''=e_i, \forall i \in V$, where $\set{e_i}_{i=1}^{n}$ are orthonormal and $x''=0$ otherwise, and gives a SDP value of 0.}, thus we defer these details to \prettyref{app:feasibility_proof}. Note that by splitting the sum into three disjoint parts, we have
	\begin{align}
	\nonumber
	\sum_{\{i_1,i_2,\hdots,i_r\} \in {V \choose r}} \norm{x_{i_1,i_2, \hdots, i_r}^*}^2 &= \sum_{\{i_1,i_2, \hdots, i_r\} \in {S \choose r}} \norm{x_{i_1,i_2, \hdots, i_r}^*}^2 + \sum_{\{i_1,i_2,\hdots,i_r\} \in \partial(S)} \norm{x_{i_1,i_2, \hdots, i_r}^*}^2 \\
	\label{eq:x_star}
	&\qquad\qquad+ \sum_{\{i_1,i_2, \hdots, i_r\} \in {V \setminus S \choose r}} \norm{x_{i_1,i_2, \hdots, i_r}^*}^2
	\end{align}
	and similarly,
	\begin{align}
	\nonumber
	\sum_{\{i_1,i_2,\hdots,i_r\} \in {V \choose r}} \norm{x_{i_1,i_2, \hdots, i_r}'}^2 &= \sum_{\{i_1,i_2, \hdots, i_r\} \in {S \choose r}} \norm{x_{i_1,i_2, \hdots, i_r}'}^2 +  \sum_{\{i_1,i_2,\hdots,i_r\} \in \partial(S)} \norm{x_{i_1,i_2, \hdots, i_r}'}^2\\ 
	\nonumber
	&\qquad\qquad+ \sum_{\{i_1,i_2, \hdots, i_r\} \in {V \setminus S \choose r}} \norm{x_{i_1,i_2, \hdots, i_r}'}^2\\
	\label{eq:x_prime}
	&= {k \choose r} + \sum_{\{i_1,i_2, \hdots, i_r\} \in {V \setminus S \choose r}} \norm{x_{i_1,i_2, \hdots, i_r}^*}^2\qquad(\text{from eqn \prettyref{eq:feasible}})\mper
	\end{align}
	Since $x^{*}$ is optimal we have that,
	\begin{align*}
	\sum_{\{i_1,i_2,\hdots,i_r\} \in {V \choose r}} \norm{x_{i_1,i_2, \hdots, i_r}^*}^2 &\geq \sum_{\{i_1,i_2,\hdots,i_r\} \in {V \choose r}} \norm{x_{i_1,i_2, \hdots, i_r}'}^2\\
	\implies\sum_{\{i_1,i_2, \hdots, i_r\} \in {S \choose r}} \norm{x_{i_1,i_2, \hdots, i_r}^*}^2 +  \sum_{\{i_1,i_2,\hdots,i_r\} \in \partial(S)} \norm{x_{i_1,i_2, \hdots, i_r}^*}^2 &\geq {k \choose r}\quad(\text{from eqns \prettyref{eq:x_star} \& \prettyref{eq:x_prime}})\mper
	\end{align*}
\end{proof}

Note that the above lemma which is similar to (\cite{McKenzie_2020}, Lemma 2.2) helps us remove the dependence of the contribution of the vectors from $V \setminus S$, is the key lemma which allows us to work with an arbitrary subhypergraph $H\brac{V \setminus S}$. Also, it makes our arguments invariant to any extra hyperedges added by an adversary.

Next, we proceed to prove \prettyref{prop:grothendieck_bound}. We begin by constructing a bipartite graph to simplify our calculations, as follows.

\begin{construction}
\label{cons:graph}
We construct a bipartite graph $G' \defeq (U_1,U_2, E')$ from the given input hypergraph $H$ as follows.
\[ \text{Here }
	U_1 \defeq (S) \cup {S \choose 2} \cup \hdots \cup {S \choose r-1} \text{ and }\\
	U_2 \defeq \paren{V \setminus S} \cup {{V \setminus S} \choose 2} \cup \hdots \cup {{V \setminus S} \choose r-1}\mper
\]
Now for each hyperedge $e$ in our original hypergraph $H$ (before the action of the monotone adversary on $H\brac{S, V \setminus S}$) such that $e \in E \cap \partial(S)$, let $I_e \defeq e \cap S$ and $J_e \defeq e \cap (V \setminus S)$. We add an edge in the graph $G'$ between the vertices $I_e \in U_1$ and $J_e \in U_2$. It is easy to see that there is a bijection between the random part of the hypergraph and $G'$.
\end{construction}
Let $A$ denote the adjacency matrix of $G'$ (of dimension $\abs{U_1} + \abs{U_2}$) and let $m'$ denote the maximum number of number of edges in the random hypergraph.

In the next few lemmas, we setup some groundwork to use this construction in establishing our claims. We prove the following bounds on $\abs{U_1}$, $\abs{U_2}$ and $m'$. The proof uses some standard results on binomial coefficients. For completeness, we state them in \prettyref{fact:sterling}.

\begin{fact}\label{fact:set_bounds}
	For all $k \leq n/2, r \geq 2$ we have,
	\begin{multicols}{2}
	\begin{enumerate}
		\item $1+\abs{U_1} \leq r\paren{\dfrac{2ek}{r}}^{r-1}\mper$
		\item $1+\abs{U_2} \leq r\paren{\dfrac{2en}{r}}^{r-1}\mper$
		\item $m' \leq \dfrac{(4e)^{r-2}kn^{r-1}}{r^{r-2}}\mper$
		\item $m' \geq k\paren{\dfrac{n}{2r}}^{r-1}\mper$
	\end{enumerate}
	\end{multicols}
\end{fact}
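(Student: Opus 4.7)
The plan is to unpack \prettyref{cons:graph} and reduce all four bounds to standard binomial-coefficient estimates. From the construction, $|U_1| = \sum_{i=1}^{r-1}\binom{k}{i}$, $|U_2| = \sum_{i=1}^{r-1}\binom{n-k}{i}$, and (by splitting on the size of the $S$-intersection of an $r$-edge in $\partial(S)$) $m' = \sum_{i=1}^{r-1}\binom{k}{i}\binom{n-k}{r-i}$.

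For (1) and (2), I would invoke the classical bound $\sum_{i=0}^{r-1}\binom{m}{i} \le \paren{em/(r-1)}^{r-1}$ (valid for $r-1 \le m$, which may be assumed in the parameter regime of interest). This gives $1 + |U_1| \le \paren{ek/(r-1)}^{r-1}$, and the elementary identity $e/(r-1) = (e/r)\cdot r/(r-1) \le 2e/r$ (using $r \ge 2$) promotes this to $\paren{2ek/r}^{r-1} \le r\paren{2ek/r}^{r-1}$, yielding (1). The same argument with $m = n - k \le n$ gives (2).

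For (3), I would overcount: each $r$-edge in $\partial(S)$ contains at least one $S$-vertex, so pairing each such edge $e$ with an arbitrary $v \in e \cap S$ gives $m' \le k\binom{n-1}{r-1} \le kn^{r-1}/(r-1)!$. It then suffices to establish $r^{r-2}/(r-1)! \le (4e)^{r-2}$. For $r = 2$ both sides equal $1$, and for $r \ge 3$ Stirling's bound $(r-1)! \ge \sqrt{2\pi(r-1)}\paren{(r-1)/e}^{r-1}$ combined with $\paren{r/(r-1)}^{r-2} \le e$ reduces the claim to $e^2 \le \sqrt{2\pi}(r-1)^{3/2}\cdot 4^{r-2}$, which is tightest at $r = 3$ and easy to verify. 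For (4), I would keep only the $i = 1$ summand: $m' \ge k\binom{n-k}{r-1} \ge k\paren{(n-k)/(r-1)}^{r-1} \ge k\paren{n/(2r)}^{r-1}$, using $n - k \ge n/2$ from $k \le n/2$ and $r - 1 \le r$.

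The only place requiring care is the $r = 2$ case of (3), where the target inequality holds with equality, so any Stirling-type weakening of $(r-1)!$ would incur a constant-factor loss; handling $r = 2$ by direct verification and invoking Stirling only for $r \ge 3$ circumvents this.
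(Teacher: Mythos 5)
Your proposal is correct, and parts (2) and (4) coincide with the paper's argument (bound $|U_2|$ via $n-k\le n$; keep only the $i=1$ summand for the lower bound on $m'$). The two divergences are worth noting. For (1)--(2), the paper first writes $\sum_{i=0}^{r-1}\binom{m}{i}\le r\binom{m}{r-1}$ (term-by-term monotonicity, which is where the leading factor $r$ in the statement originates and which implicitly needs $r-1$ below the mode of the binomial) and only then applies $\binom{m}{r-1}\le\paren{em/(r-1)}^{r-1}$; your classical tail bound $\sum_{i\le r-1}\binom{m}{i}\le\paren{em/(r-1)}^{r-1}$ absorbs the whole sum at once under the weaker hypothesis $r-1\le m$, making the factor $r$ superfluous --- a marginally cleaner route to the same constant. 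For (3) the divergence is more substantial: the paper overcounts $\partial(S)$ by $\binom{k}{1}\binom{n-k}{1}\binom{n-2}{r-2}$ (one vertex forced into each side, the remaining $r-2$ arbitrary), after which the target constant follows from the purely elementary inequality $r\le 4(r-2)$ for $r\ge 3$; you overcount by $k\binom{n-1}{r-1}$ and then must pay down the resulting $1/(r-1)!$ with a Stirling estimate. Both chains are valid (your reduction to $e^2\le\sqrt{2\pi}(r-1)^{3/2}4^{r-2}$ at $r=3$ checks out), but the paper's choice of overcount is tuned so that $(4e)^{r-2}/r^{r-2}$ falls out with no asymptotic machinery. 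Your separate treatment of $r=2$ in (3) mirrors the paper's, which likewise notes that the generic estimate breaks there and verifies the bound by computing $m'=k(n-k)$ exactly.
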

\begin{proof}
	By using \prettyref{fact:sterling},
	\begin{enumerate}
		\item
		\begin{align*}
			1+\abs{U_1} = {k \choose 0} + {k \choose 1} + {k \choose 2} + \hdots + {k \choose {r-1}} \leq r {k \choose r-1} \leq r\paren{\dfrac{ek}{r-1}}^{r-1} \leq r\paren{\dfrac{2ek}{r}}^{r-1}\mper
		\end{align*}
		where we use the fact that $r- 1 \geq \ffrac{r}{2} \iff r \geq 2$ in the last step.
		\item Similarly,
		\begin{align*} 
		1+\abs{U_2} &= {n-k \choose 0} + {n-k \choose 1} + {n-k \choose 2} + \hdots + {n-k \choose {r-1}}\\
		&\leq r {n-k \choose r-1}
		\leq r \left({\frac{e(n-k)}{r-1}}\right)^{r-1} \leq r\paren{\dfrac{2en}{r}}^{r-1}\mper
		\end{align*}
		\item 
		Since only possible edges are between subsets of size $i$ in $U_1$ and size $r-i$ in $U_2$, we can write $m'$ as
		\begin{align*}
		m' = \sum_{i=1}^{r-1} {k \choose i}{{n-k} \choose r-i} \leq {k \choose 1}{{n-k} \choose 1}{n-2 \choose r-2} \leq k(n-k)\paren{\dfrac{e(n-2)}{r-2}}^{r-2} \leq \dfrac{(4e)^{r-2}kn^{r-1}}{r^{r-2}} \mper
		\end{align*}
		The first inequality follows from the fact that every possible hyperedge in $\partial(S)$ can be upper bounded by picking $r$-tuples where at least one vertex is chosen from $S$ and $V \setminus S$ each and rest $r-2$ vertices are chosen arbitrarily. The last inequality follows form the fact that $r \geq 3$ and $n \geq n - 2~\&~n - k$. The inequality is not applicable for $r=2$ but we can show the bound on $m'$ still holds by computing $m'$ exactly.
		\item \begin{align*}
		m' = \sum_{i=1}^{r-1} {k \choose i}{{n - k} \choose r-i} \geq {k \choose 1}{{n-k} \choose {r-1}} \geq k\paren{\dfrac{n-k}{r-1}}^{r-1} \geq k\paren{\dfrac{n}{2r}}^{r-1}\mper 
		\end{align*}
		where we use the fact that $k \leq n/2$ and $r \geq r - 1$ in the last step.
	\end{enumerate}
\end{proof}

\begin{definition} \label{def:hyper2graph}
	We define a centered matrix $B \in \R^{\paren{\abs{U_1} + \abs{U_2}} \times \paren{\abs{U_1} + \abs{U_2}}}$,
	\[
	B_{I,J} \defeq
	\begin{cases}
	p-A_{I,J} &\quad \forall i \in [r-1], I \in {S \choose i}, J \in {{V \setminus S} \choose r-i}; \forall j \in [r-1], I \in {{V \setminus S} \choose j}, J \in {{S} \choose r-j}\\
	0 &\quad \text{otherwise}\mper
	\end{cases}
	\]
\end{definition}
\noindent
where $A$ denotes the adjacency matrix of $G'$ in  \prettyref{cons:graph}. Note that by construction, $\mathbbm{E}\brac{B} = 0$. We rewrite the contribution of the random hypergraph towards the SDP mass in terms of the matrix $B$ using the next lemma.

\begin{lemma}
	\label{lem:cross_terms}
	\begin{align*}
	\sum_{\set{i_1, i_2, \hdots, i_r} \in \partial(S)} \norm{x_{i_1, i_2, \hdots, i_r}^{*}}^2
	&=  \dfrac{1}{2p} \paren{\sum_{u_1,u_2 \in U_1 \cup U_2}{B_{u_1,u_2}\inprod{x_{u_1}^{*},x_{u_2}^{*}}}}\mper
	\end{align*}
\end{lemma}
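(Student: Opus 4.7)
My plan is to expand the right-hand side of the claimed identity and identify each term with the corresponding summand on the left. First, I would observe that the matrix $B$ is symmetric and vanishes outside the ``cross blocks'' between $U_1$ and $U_2$: by \prettyref{def:hyper2graph}, $B_{u_1, u_2} \neq 0$ forces $\{u_1, u_2\}$ to consist of one element of $U_1$ and one of $U_2$ whose sizes sum to $r$. Using symmetry of $B$ and of the inner product, this lets me rewrite
\[
\sum_{u_1, u_2 \in U_1 \cup U_2} B_{u_1, u_2} \inprod{x^*_{u_1}, x^*_{u_2}} \;=\; 2 \sum_{\substack{I \in U_1,\, J \in U_2 \\ |I| + |J| = r}} B_{I, J} \inprod{x^*_I, x^*_J}.
\]

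The crux is then a term-by-term identification. Fix a pair $(I, J)$ with $I \in U_1$, $J \in U_2$, $|I| + |J| = r$, and let $e = I \cup J \in \partial(S)$. Since $I, J$ are nonempty and $|I \cup J| = r \leq r+1$, constraint \prettyref{eq:sdp3} gives $\inprod{x^*_I, x^*_J} = \norm{x^*_e}^2$. I would then split into cases based on whether $e$ is a hyperedge. If $e \in E$ (irrespective of whether it was placed by the random process or by the monotone adversary), then $\norm{x^*_e}^2 = 0$ by constraint \prettyref{eq:sdp2}, so the product $B_{I,J}\inprod{x^*_I, x^*_J}$ vanishes and trivially equals $p \cdot \norm{x^*_e}^2$. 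If $e \notin E$, then $A_{I,J} = 0$ by \prettyref{cons:graph}, so $B_{I,J} = p$ and hence $B_{I,J}\inprod{x^*_I, x^*_J} = p \cdot \norm{x^*_e}^2$ directly. Thus
\[
B_{I,J} \inprod{x^*_I, x^*_J} \;=\; p \cdot \norm{x^*_{I \cup J}}^2
\]
holds unconditionally for every such pair $(I,J)$.

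To finish, I would note that the assignment $(I, J) \mapsto I \cup J$ is a bijection between the index set of the remaining sum and $\partial(S)$, with inverse $e \mapsto (e \cap S,\, e \cap (V \setminus S))$. Substituting, summing, and dividing by $2p$ yields exactly $\sum_{e \in \partial(S)} \norm{x^*_e}^2$, as required. The lemma is essentially bookkeeping; the only subtle point worth flagging is that the adjacency matrix $A$ records only random hyperedges and not those added by the adversary, yet constraint \prettyref{eq:sdp2} kills the contribution of both kinds of hyperedges, so the identity is unaffected. Beyond this observation, the proof is a straightforward expansion and reindexing.
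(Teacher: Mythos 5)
Your proposal is correct and takes essentially the same route as the paper: the paper's proof runs the identical computation starting from the left-hand side (inserting the zero term $\tfrac{1}{p}\sum_{e\in\partial(S)\cap E}\norm{x^*_e}^2$ and re-indexing via the same bijection $e \mapsto (e\cap S,\, e\cap(V\setminus S))$, using constraints \prettyref{eq:sdp2} and \prettyref{eq:sdp3} and the factor-of-$2$ symmetry of $B$), whereas you expand the right-hand side, but the ingredients are the same. Your remark that adversarial hyperedges are harmless because $\norm{x^*_e}^2=0$ kills both kinds of edges is exactly the point the paper makes in the discussion immediately following the lemma.
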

\begin{proof}
	\begin{align*}
	\sum_{\set{i_1, i_2, \hdots, i_r} \in \partial(S)} \norm{x_{i_1, i_2, \hdots, i_r}^{*}}^2 
	&= \sum_{\set{i_1, i_2, \hdots, i_r} \in \partial(S)} \norm{x_{i_1, i_2, \hdots, i_r}^{*}}^2 - \dfrac{1}{p} \sum_{e \in \partial(S) \cap E} \norm{x^{*}_{e}}^2 \qquad\paren{\because \norm{x^{*}_e}^2=0,  \forall e \in E}\\
	&= \dfrac{1}{p}\paren{\sum_{\set{i_1, i_2, \hdots, i_r} \in \partial(S)} \paren{p-\one_{\set{i_1, i_2, \hdots, i_r} \in E}} \norm{x_{i_1, i_2, \hdots, i_r}^{*}}^2} \qquad\paren{\text{Combining the sum}}\\
	&= \dfrac{1}{p}\paren{\sum_{i=1}^{r-1}\sum_{u_1 \in {S \choose i}, u_2 \in {{V \setminus S} \choose {r-i}}} \paren{p-A_{u_1, u_2}} \inprod{x_{u_1}^{*}, x_{u_2}^{*}}} \qquad\paren{\text{by SDP constraint } \prettyref{eq:sdp3}}\\
	&= \dfrac{1}{p}\paren{\sum_{i=1}^{r-1}\sum_{u_1 \in {S \choose i}, u_2 \in {{V \setminus S} \choose {r-i}}} B_{u_1, u_2} \inprod{x_{u_1}^{*}, x_{u_2}^{*}}} \qquad\paren{\text{by \prettyref{def:hyper2graph}} }\\
	&= \dfrac{1}{p} \paren{\sum_{u_1 \in U_1,u_2 \in U_2}{B_{u_1,u_2}\inprod{x_{u_1}^{*},x_{u_2}^{*}}}} = \dfrac{1}{2p} \paren{\sum_{u_1,u_2 \in U_1 \cup U_2}{B_{u_1,u_2}\inprod{x_{u_1}^{*},x_{u_2}^{*}}}}
	\mper
	\end{align*}
\end{proof}

It is important to note that the above lemma rewrites the mass of the SDP by vectors in the boundary of $S$ (the random part) using the matrix $B$. The entries of $B$ only depend on the initial set of random edges, thus any extra edges added by a monotone adversary can be ignored w.l.o.g.

We are now ready to prove \prettyref{prop:grothendieck_bound}. The proof uses some commonly used concentration inequalitites. The exact variants of these are stated in \prettyref{fact:bern} and \prettyref{fact:groth}.
\begin{proof}[Proof of \prettyref{prop:grothendieck_bound}]
	We start by bounding the term
	$
		\sum_{u_1, u_2 \in U_1 \cup U_2} B_{u_1, u_2} \inprod{x_{u_1}^{*}, x_{u_2}^{*}}\mper
	$	
	Since ${\norm{x_I^*}}^2 \leq 1, \; \forall I \subset V,\;\abs I \leq r$ we can use Grothendieck's inequality (\prettyref{fact:groth}) to bound it. We restate it here.
	\begin{align*}
		\max_{\substack{x_{u_1}, x_{u_2}:u_1,u_2 \in U_1 \cup U_2\\ \norm{x_{u_1}}, \norm{x_{u_2}} \leq 1}} \abs{\sum_{u_1, u_2 \in U_1 \cup U_2} B_{u_1, u_2} \inprod{x_{u_1}, x_{u_2}}} &\leq
		2 \max_{\substack{y_{u_1}, y_{u_2}:u_1,u_2 \in U_1 \cup U_2\\ {y_{u_1}}, {y_{u_2}} \in \{\pm{1}\}}} \abs{\sum_{u_1, u_2 \in U_1 \cup U_2} B_{u_1, u_2} y_{u_1} y_{u_2}}\\
		&\leq 4 \max_{\substack{y_{u_1}, y_{u_2}:u_1,u_2 \in U_1 \cup U_2\\ {y_{u_1}}, {y_{u_2}} \in \{\pm{1}\}}} \abs{\sum_{u_1 \in U_1,u_2 \in U_2} B_{u_1, u_2} y_{u_1} y_{u_2}}\mper \numberthis \label{eq:discretize_term}
	\end{align*}
	
	For a fixed set of variables, $y_{u_1}, y_{u_2}$ and a parameter $\delta \in (0, 1]$ to be fixed later, we use Bernstein's inequality (\prettyref{fact:bern}) on $m'$ independent random variables $B_{u_1,u_2}y_{u_1}y_{u_2}$ (each of which has mean $0$, is bounded by $1$, and has a variance atmost $p$). Then for $t =\delta p m'$ and for all $\delta \in (0,1]$, and using an upper bound on $m'$ from \prettyref{fact:set_bounds} we have that,
	\begin{align*} 
	\ProbOp\left[\abs{
			\sum_{u_1 \in U_1,u_2 \in U_2} B_{u_1, u_2} y_{u_1} y_{u_2}
		} > \delta p \dfrac{(4e)^{r-2}kn^{r-1}}{r^{r-2}}
	\right]
	&\leq \ProbOp\left[\abs{
		\sum_{u_1 \in U_1,u_2 \in U_2} B_{u_1, u_2} y_{u_1} y_{u_2}
	} > \delta p m'\right]\\
	&\leq 2 \exp\paren{-\dfrac{{\delta}^2 p^2 m'}{2p + 2\delta p/3}}\\
	&\leq 2 \exp\paren{-\dfrac{3{\delta}^2 p m'}{8}} \qquad\qquad\paren{\because \delta \leq 1}\mper
\end{align*}
	By a union bound over all possible values of $y_{u_1},y_{u_2}$,
		\begin{align*}
	&\ProbOp\left[
	\max_{\substack{y_{u_1}, y_{u_2}:u_1,u_2 \in U_1 \cup U_2\\ {y_{u_1}}, {y_{u_2}} \in \{\pm{1}\}}}{
		\abs{\sum_{u_1 \in U_1, u_2 \in U_2} B_{u_1,u_2} y_{u_1} y_{u_2}} > \delta p \dfrac{(4e)^{r-2}kn^{r-1}}{r^{r-2}}
	}\right]\\ 
	&\qquad\qquad\qquad\qquad\leq 2^{1+2\paren{\abs{U_1}+\abs{U_2}}} \exp\paren{-\dfrac{3{\delta}^2 pm'}{8}}\\
	&\qquad\qquad\qquad\qquad\leq \exp\paren{2\paren{1+\abs{U_1}}+2\paren{1+\abs{U_2}}-\dfrac{3{\delta}^2 pm'}{8}}\qquad \\
	&\qquad\qquad\qquad\qquad\leq \exp\paren{2r\paren{\dfrac{2ek}{r}}^{r-1} + 2r\paren{\dfrac{2en}{r}}^{r-1} - \dfrac{3\delta^2pk}{8}\paren{\dfrac{n}{2r}}^{r-1}}\qquad \paren{\text{from \prettyref{fact:set_bounds}}}\\
	&\qquad\qquad\qquad\qquad\leq \exp\paren{4r\paren{\dfrac{2en}{r}}^{r-1} - \dfrac{3\delta^2pk}{8}\paren{\dfrac{n}{2r}}^{r-1}}\qquad \paren{\text{since } k < n}\\
	&\qquad\qquad\qquad\qquad= \exp\paren{-\paren{\dfrac{3\delta^2pk}{r2^{r+2}}-4(2e)^{r-1}}\dfrac{n^{r-1}}{r^{r-2}}}\\
	&\qquad\qquad\qquad\qquad= \exp\paren{-\dfrac{n^{r-1} (2e)^{r-1} 2(e-2)}{r^{r-2}}}\qquad \paren{\text{we set }\delta^2 = \frac{r2^{2r+2}e^{r}}{3pk}}\\
	&\qquad\qquad\qquad\qquad\leq \exp\paren{-n} \qquad \paren{\because \dfrac{n^{r-1}}{r^{r-2}} \geq n\paren{\dfrac{n}{r}}^{r-2}\text{ where } n/r \geq 1 \text{ and } 2(e-2)(2e)^{r-1} \geq 1} \mper
	\end{align*}

Note that this holds when $\delta^2 \leq 1 \iff \dfrac{r2^{2r+2}e^{r}}{3pk} \leq 1 \iff k \geq \dfrac{r2^{2r+2}e^{r}}{3p}$.
Therefore using the union bound above and equation \prettyref{eq:discretize_term}, we have that with high probability (for enough large $n$ and when $\delta^2 \leq 1$),
\[\sum_{u_1, u_2 \in U_1 \cup U_2} B_{u_1, u_2} \inprod{x_{u_1}^{*}, x_{u_2}^{*}} \leq 4\delta p \dfrac{(4e)^{r-2}kn^{r-1}}{r^{r-2}}\mper\]
Using \prettyref{lem:cross_terms} we have that,
\[
\sum_{\set{i_1, i_2, \hdots, i_r} \in \partial(S)} \norm{x_{i_1, i_2, \hdots, i_r}^{*}}^2 \leq 2 \delta \dfrac{(4e)^{r-2}kn^{r-1}}{r^{r-2}} 
=\paren{\dfrac{2^{3r-2}e^{3r/2-2}}{\sqrt{3}r^{r-5/2}}}\paren{\sqrt{\frac{k}{p}}}n^{r-1}
\]
where we substitute the value of $\delta$ to complete the proof.
\end{proof}

We define the following function for notational convenience.

\begin{definition}\label{def:const_f}
	Let $f(r) \defeq \dfrac{r^{5/2}2^{3r-2}e^{3r/2-2}}{\sqrt{3}} \mper$
\end{definition}

\begin{proof}[Proof of \prettyref{cor:lower_bound_S}]
	The proof follows almost immediately from \prettyref{prop:grothendieck_bound} and \prettyref{lem:total bound},
	\begin{align*}
		\sum_{\{i_1i_2 \hdots i_r\} \in {S \choose r}} \norm{x_{i_1i_2 \hdots i_r}^*}^2
		\geq {k \choose r}- \sum_{\{i_1,i_2,\hdots,i_r\} \in \partial(S)} \norm{x_{i_1, i_2, \hdots, i_r}^{*}}^2 &\geq {k \choose r} - \paren{\dfrac{2^{3r-2}e^{3r/2-2}}{\sqrt{3}r^{r-5/2}}}\paren{\sqrt{\frac{k}{p}}}n^{r-1}\\
		&= {k \choose r} - \dfrac{f(r)n^{r-1}\sqrt{k}}{r^r\sqrt{p}}
		\mper
	\end{align*}
\end{proof}

\section{Algorithm for computing a large independent set}
\label{sec:wider_range_alg}
In this section, we will prove a formal version of \prettyref{thm:main2} which is a generalization of Theorem 1.1 of \cite{McKenzie_2020} to $r$-uniform hypergraphs (\prettyref{lem:5.3}, \prettyref{lem:5.5} and proof of \prettyref{thm:main2}). We will crucially use the lower bound on the SDP mass from the vectors in S, i.e.,  \prettyref{cor:lower_bound_S}. 

As a first step towards this, in \prettyref{lem:5.3}, we show that there exists a vertex $u \in S$ for which the 1 level vectors $x^*_v$ (corresponding to vertices in $S$) in the optimal solution have a large projection on $x^*_u$.

\begin{lemma}
	\label{lem:5.3}

	For $k \geq \dfrac{r2^{2r+2}e^{r}}{3p}$, there exists a vertex $u \in S$ such that, with high probability (over the randomness of the input).

	\[ \mathbbm{E}_{v \in S \setminus \set{u}}\inprod{x^{*}_u,x^{*}_v} \geq \mathbbm{E}_{\set{i_1,i_2,\hdots,i_{r-1}} \sim {S \setminus \set{u} \choose r-1}} {\inprod{x^{*}_u,x^{*}_{i_1,i_2,\hdots,i_{r-1}}}} \geq 1- \dfrac{f(r)n^{r-1}}{k^{r-0.5}\sqrt{p}} \mper \]
\end{lemma}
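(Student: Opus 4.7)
The plan is to establish the two inequalities in the statement separately, with the second being a direct consequence of \prettyref{cor:lower_bound_S}.

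\textbf{First inequality (comparing 1-level to $(r-1)$-level averages).} For the first inequality, I would use the monotonicity constraint \prettyref{eq:subset_constraint} of \prettyref{sdp:his}. The key observation is that for any $u \in S$ and any $(r-1)$-subset $\{i_1,\ldots,i_{r-1}\} \subseteq S \setminus \{u\}$, and any $v \in \{i_1,\ldots,i_{r-1}\}$, constraint \prettyref{eq:subset_constraint} (applied with $I=\{v\} \subseteq J = \{i_1,\ldots,i_{r-1}\}$) gives $\inprod{x^*_u, x^*_v} \geq \inprod{x^*_u, x^*_{i_1,\ldots,i_{r-1}}}$. Now note that picking $\{i_1,\ldots,i_{r-1}\}$ uniformly from $\binom{S \setminus \{u\}}{r-1}$ and then $v$ uniformly from this subset produces a uniform $v \in S \setminus \{u\}$. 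So taking the expectation of the inequality above over the joint choice of $\{i_1,\ldots,i_{r-1}\}$ and $v$ yields the first inequality.

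\textbf{Second inequality (lower bound on the $(r-1)$-level average).} By constraint \prettyref{eq:sdp3}, $\inprod{x^*_u, x^*_{i_1,\ldots,i_{r-1}}} = \norm{x^*_{\{u,i_1,\ldots,i_{r-1}\}}}^2$ whenever $u \notin \{i_1,\ldots,i_{r-1}\}$. Summing over $u \in S$ of $\binom{k-1}{r-1} \cdot \Esymb_{I} \norm{x^*_{\{u\} \cup I}}^2$ and noting that each $r$-subset of $S$ is produced exactly $r$ times (one for each choice of $u$), I obtain
\[
\frac{1}{k}\sum_{u \in S} \Esymb_{I \in \binom{S\setminus\{u\}}{r-1}} \norm{x^*_{\{u\}\cup I}}^2 = \frac{1}{\binom{k}{r}}\sum_{J \in \binom{S}{r}} \norm{x^*_J}^2.
\]
Applying \prettyref{cor:lower_bound_S} to the right-hand side (which holds w.h.p.\ under the hypothesis $k \geq r2^{2r+2}e^r/(3p)$), and using the standard bound $\binom{k}{r} \geq (k/r)^r$ (valid since $k \geq r$), the error term becomes
\[
\frac{f(r) n^{r-1} \sqrt{k}}{r^r \sqrt{p} \binom{k}{r}} \leq \frac{f(r) n^{r-1}}{k^{r-0.5}\sqrt{p}}.
\]
Thus the average (over $u \in S$) of $\Esymb_I \norm{x^*_{\{u\}\cup I}}^2$ is at least $1 - f(r) n^{r-1}/(k^{r-0.5}\sqrt{p})$, so by an averaging (pigeonhole) argument there exists at least one $u \in S$ for which this bound holds, which is precisely the second inequality.

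\textbf{Obstacles.} There are no substantive obstacles here; the lemma is essentially a bookkeeping consequence of (i) the SDP consistency/monotonicity constraints \prettyref{eq:sdp3} and \prettyref{eq:subset_constraint}, (ii) the mass lower bound from \prettyref{cor:lower_bound_S}, and (iii) the elementary inequality $\binom{k}{r} \geq (k/r)^r$. The only thing to be careful about is the double-counting factor $r$ when converting from a sum over $(u, I)$ pairs to a sum over $r$-subsets $J \subseteq S$, and making sure the constants line up so that the factor $r^r$ from $\binom{k}{r} \geq k^r/r^r$ cancels the $1/r^r$ appearing in \prettyref{cor:lower_bound_S}. The ``w.h.p.''\ qualifier is inherited verbatim from \prettyref{cor:lower_bound_S}; no additional probabilistic arguments are needed.
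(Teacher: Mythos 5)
Your proposal is correct and follows essentially the same route as the paper: the second inequality comes from \prettyref{cor:lower_bound_S} via constraint \prettyref{eq:sdp3}, the factor-$r$ double counting, an averaging/pigeonhole step over $u \in S$, and the bound $\binom{k}{r} \ge (k/r)^r$; the first inequality comes from the monotonicity constraint \prettyref{eq:subset_constraint}. The only differences are cosmetic — you phrase the comparison of the $1$-level and $(r-1)$-level averages via a two-stage sampling argument where the paper does the equivalent explicit count with $\binom{k-2}{r-2}$ factors.
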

\begin{proof}
From \prettyref{cor:lower_bound_S} we have that for  $k \geq \dfrac{r2^{2r+3}e^{r}}{3p}$,
	\[ \sum_{\{i_1,i_2,\hdots,i_r\} \in {S \choose r}}{\norm{x^{*}_{i_1,i_2,\hdots,i_r}}}^2 \geq {k \choose r} - \dfrac{f(r)n^{r-1}\sqrt{k}}{r^r\sqrt{p}} \mper\]
From the SDP constraint \prettyref{eq:sdp3}, we split the above sum as follows,
	\begin{equation}\sum_{u \in S, \set{i_1,i_2,\hdots,i_{r-1}} \in {S \setminus \set{u} \choose r-1}}{\inprod{x^{*}_u,x^{*}_{i_1,i_2,\hdots,i_{r-1}}}} \geq r\paren{{k \choose r} - \dfrac{f(r)n^{r-1}\sqrt{k}}{r^r\sqrt{p}}} \mper \label{eq:summation_inprod_bound} \end{equation}
Therefore there exists a vertex $u \in S$ such that,
	\[ \sum_{\set{i_1,i_2,\hdots,i_{r-1}} \in {S \setminus \set{u} \choose r-1}}{\inprod{x^{*}_u,x^{*}_{i_1,i_2,\hdots,i_{r-1}}}} \geq \dfrac{r}{k}\paren{{k \choose r} - \dfrac{f(r)n^{r-1}\sqrt{k}}{r^r\sqrt{p}}} \mper\]
Since number of terms in expression in the above sum  is ${k-1} \choose {r-1}$. We rewrite the above expression as an expectation over the uniform distribution on such tuples as,
\begin{align*}
	\mathbbm{E}_{\set{i_1,i_2,\hdots,i_{r-1}} \sim {S \setminus \set{u} \choose r-1}} {\inprod{x^{*}_u,x^{*}_{i_1,i_2,\hdots,i_{r-1}}}} &\geq \dfrac{r}{k{{k-1} \choose {r-1}}}\paren{{k \choose r} - \dfrac{f(r)n^{r-1}\sqrt{k}}{r^r\sqrt{p}}}
	= 1 - \dfrac{rf(r)n^{r-1}\sqrt{k}}{k{{k-1} \choose {r-1}}r^r\sqrt{p}}\\
	&\geq 1- \dfrac{rf(r)n^{r-1}\sqrt{k}}{k\paren{\dfrac{k-1}{r-1}}^{r-1}r^r\sqrt{p}}
	\geq 1- \dfrac{rf(r)n^{r-1}\sqrt{k}}{k\paren{\dfrac{k}{r}}^{r-1}r^r\sqrt{p}}\\
	&= 1- \dfrac{f(r)n^{r-1}}{k^{r-0.5}\sqrt{p}}\mper
\end{align*}
where we used \prettyref{fact:sterling} and the fact that, $\flatfrac{(k-1)}{(r-1)} \geq \flatfrac{k}{r} \iff k \geq r$.

Using our SDP constraint \prettyref{eq:subset_constraint} we can rewrite the summation in \prettyref{eq:summation_inprod_bound} as,
\begin{align} 
\nonumber
\sum_{u\in S,\set{i_1,i_2,\hdots,i_{r-1}} \in {S \setminus \set{u} \choose r-1}} {\inprod{x^{*}_u,x^{*}_{i_1,i_2,\hdots,i_{r-1}}}} 
& \leq \dfrac{1}{(r-1)}\sum_{u\in S,\set{i_1,i_2,\hdots,i_{r-1}} \in {S \setminus \set{u} \choose r-1}} \sum_{l=1}^{r-1}{\inprod{x^*_u,x^*_{i_l}}}\\
\label{eq:relate_vectors}
&=\dfrac{{k-2 \choose r-2}}{(r-1)}
\sum_{u \in S, v \in S \setminus \set{u}}{\inprod{x^{*}_u,x^{*}_v}} 
\end{align}
\noindent
where the equality above can be argued by fixing a vertex $u \in S,v \in S \setminus \set{u}$ and observing that there are ${k-2 \choose r-2}$ terms in the double summation  containing such $(u,v)$. We divide the equation \prettyref{eq:relate_vectors} by $k{k-1 \choose r-1}$ (the number of terms in the summation on the left side) to rewrite the inequality in form of expectation as,
\begin{align*}
\mathbb{E}_{\set{i_1,i_2,\hdots,i_{r-1}} \sim {S \setminus \set{u} \choose r-1}} {\inprod{x^{*}_u,x^{*}_{i_1,i_2,\hdots,i_{r-1}}}} 
&\leq \dfrac{{k-2 \choose r-2}}{(r-1)k{k-1 \choose r-1}}
\sum_{u \in S, v \in S \setminus \set{u}}{\inprod{x^{*}_u,x^{*}_v}}\\
&= \dfrac{1}{k(k-1)}\sum_{u \in S, v \in S \setminus \set{u}}{\inprod{x^{*}_u,x^{*}_v}} =  
\mathbb{E}_{v \in S \setminus \set{u}}\inprod{x^{*}_u,x^{*}_v} 
\end{align*}
\noindent
where we have used the fact that ${k-1 \choose r-1}= \frac{k-1}{r-1} {k-2 \choose r-2}$.
It then follows that there exists a vertex $u \in S$ such that
\begin{align*}
\mathbb{E}_{v \in S \setminus \set{u}}\inprod{x^{*}_u,x^{*}_v} 
\geq 1 - \dfrac{f(r)n^{r-1}}{\sqrt{p}k^{r-0.5}}
\end{align*}
\end{proof}

\prettyref{lem:5.3} shows that a large fraction of the 1-level vectors in $S$ have a large projection on $x^*_u$. We start with the following definition,

\begin{definition}
	\label{def:ball}
	We denote the set of all $l$-tuples containing vertices from a set $T \subseteq V$ (where $l \leq \abs{T}$) whose corresponding vectors have a projection at least $\mathcal{R}$ with the vector $x_u^*$ by
	\[
	\mathcal{B}_{u}(l, \mathcal{R}, T) \defeq \set{\set{v_1, v_2, \hdots, v_{l}} : \set{v_1, v_2, \hdots, v_{l}} \in {T \choose l} \text{ and } \inprod{x_u^*,x_{{v_1, v_2, \hdots, v_{l}}}^*} \geq \mathcal{R}}\mper
	\]
\end{definition}

Note that the typical values of $l$ of interest will be $1$ in \prettyref{thm:main2} and $r-1$ in \prettyref{thm:main}. 

\begin{lemma}\label{lem:5.5}
	For $k \geq \dfrac{r2^{2r+2}e^{r}}{3p}$, there exists a vertex $u \in S$ such that
	\[ \abs{\mathcal{B}_u\paren{1,1 - \dfrac{1}{2r}, S}} \geq 
	(k-1)\paren{1 - \dfrac{2rf(r)n^{r-1}}{\sqrt{p}k^{r-0.5}}} \]
	with high probability (over the randomness of the input).
\end{lemma}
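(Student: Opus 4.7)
The plan is to derive this directly from \prettyref{lem:5.3} by a Markov-type averaging argument. Fix the vertex $u \in S$ guaranteed by \prettyref{lem:5.3}, so that with high probability
\[ \mathbb{E}_{v \in S \setminus \set{u}}\inprod{x^{*}_u,x^{*}_v} \geq 1 - \dfrac{f(r)n^{r-1}}{k^{r-0.5}\sqrt{p}}\mper \]
Define the nonnegative random variable $X_v \defeq 1 - \inprod{x^*_u,x^*_v}$ for $v \in S \setminus \set{u}$. Nonnegativity follows from the SDP constraint $\norm{x_u^*}^2 = \norm{x_v^*}^2 = 1$ together with Cauchy--Schwarz, so $\inprod{x^*_u,x^*_v} \leq 1$. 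The above inequality is exactly $\mathbb{E}_{v}[X_v] \leq \eta$ where $\eta \defeq f(r)n^{r-1}/(k^{r-0.5}\sqrt{p})$.

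Next I would invoke Markov's inequality on $X_v$ at threshold $1/(2r)$:
\[ \Pr_{v \in S \setminus \set{u}}\brac{X_v > \tfrac{1}{2r}} \leq 2r \cdot \mathbb{E}_{v}[X_v] \leq 2r\eta\mper \]
Equivalently, the number of $v \in S \setminus \set{u}$ with $\inprod{x^*_u,x^*_v} \geq 1 - 1/(2r)$ is at least $(k-1)(1 - 2r\eta)$. By \prettyref{def:ball} (with $l=1$, $\mathcal{R} = 1 - 1/(2r)$, $T = S$), this set is exactly $\mathcal{B}_u(1, 1-1/(2r), S)$, so
\[ \Abs{\mathcal{B}_u\paren{1, 1 - \tfrac{1}{2r}, S}} \geq (k-1)\paren{1 - \dfrac{2rf(r)n^{r-1}}{\sqrt{p}k^{r-0.5}}}\mcom \]
as desired.

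There is essentially no obstacle here: the only content is the one-line Markov inequality applied to $1-\inprod{x^*_u,x^*_v}$, and the high-probability statement is inherited verbatim from \prettyref{lem:5.3}. The hypothesis $k \geq r2^{2r+2}e^r/(3p)$ is exactly the hypothesis of \prettyref{lem:5.3}, so no additional parameter conditions are needed. One small point to remember in the write-up is to note that $v \in S \setminus \set{u}$ (so the vertex $u$ itself is not counted in $\mathcal{B}_u$), which is why the bound is $(k-1)(\cdots)$ rather than $k(\cdots)$.
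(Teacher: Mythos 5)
Your proposal is correct and matches the paper's own proof essentially verbatim: both apply Markov's inequality to the nonnegative quantity $1-\inprod{x^*_u,x^*_v}$ for the vertex $u$ guaranteed by \prettyref{lem:5.3}, with threshold $1-\mathcal{R}=1/(2r)$, and then reinterpret the tail bound as a count over the $k-1$ vertices of $S\setminus\set{u}$. No further comment is needed.
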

\begin{proof}\label{app:5.5-proof}
    We note that ${1-\inprod{x^{*}_u,x^{*}_v}} \geq 0$ and for $\mathcal{R} \in \paren{0,1}$ and for  $k \geq \dfrac{r2^{2r+3}e^{r}}{3p}$, by applying Markov's inequality on $\paren{1 - \inprod{x^{*}_u,x^{*}_v}}$, where $u$ is the vertex guaranteed in \prettyref{lem:5.3} and $v \in V \setminus S$ we have that,
\begin{align*} 
\mathbb{P}_{v \in S \setminus \set{u}}\left[{1-\inprod{x^{*}_u,x^{*}_v}} > {1-\mathcal{R}}\right] < \dfrac{\dfrac{f(r)n^{r-1}}{\sqrt{p}k^{r-0.5}}}{1-\mathcal{R}} \mper && \text{(using \prettyref{lem:5.3})}
\end{align*}
 We can rewrite the above expression as the fraction of vertices which satisfy  $\paren{\inprod{x^{*}_u,x^{*}_v} < \mathcal{R}}$, since the underlying distribution is the uniform distribution over all such $v$ and  by setting $\mathcal{R} = 1-\ffrac{1}{2r}$,
\[ \abs{v \in S \setminus \set{u} : \inprod{x^{*}_u,x^{*}_v} < 1-\dfrac{1}{2r}} < (k-1)\paren{\dfrac{2rf(r)n^{r-1}}{\sqrt{p}k^{r-0.5}}} \mper \]
\[ \therefore \abs{\mathcal{B}_u\paren{1,1-\dfrac{1}{2r}, S}} = \abs{v \in S \setminus \set{u} : \inprod{x^{*}_u,x^{*}_v} \geq 1-\dfrac{1}{2r}} \geq (k-1)\paren{1 - \dfrac{2rf(r)n^{r-1}}{\sqrt{p}k^{r-0.5}}} \mper \]
\end{proof}

In \cite{McKenzie_2020} they use the SDP constraint $\inprod{x_u,x_v}=0, \forall \set{u,v} \in E$ to show that the set of vectors which have a large projection on $x^{*}_{u}$ is an independent set. Therefore they directly analyze the bound on the size of the set to obtain an independent set, in a range of $p$ such that it covers atleast $\paren{1-\e}$ fraction of vertices in S. However in our setting, we are unable to guarantee directly that this set of vectors is an independent set. We crucially use the Lasserre/SoS like SDP constraints \prettyref{eq:sdp3} and \prettyref{eq:union_bound} and an appropriately large value of $\mathcal{R}$ ($\mathcal{R} \geq 1-\frac{1}{2r}$) to show that the set guaranteed in \prettyref{lem:5.5} is an independent set.

\begin{lemma}\label{lem:5.6}
For $k \geq \dfrac{r2^{2r+2}e^{r}}{3p}$, there exists a vertex $u \in S$ such that $\mathcal{B}_u\paren{1,1 - \dfrac{1}{2r},V}$ is an independent set with high probability (over the randomness of the input).
\end{lemma}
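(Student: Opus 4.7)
\medskip

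\textbf{Plan.} The strategy is to take the vertex $u \in S$ guaranteed by \prettyref{lem:5.5} (equivalently, the vertex produced by \prettyref{lem:5.3}) and show, via an argument that is entirely SDP-deductive (no further probabilistic input is required), that no $r$-tuple inside $\mathcal{B}_u(1, 1 - \tfrac{1}{2r}, V)$ can form a hyperedge. The high probability event is inherited wholesale from \prettyref{lem:5.5}. Note that this is purely a statement about the SDP solution restricted to the ball defined by the chosen $u$, and in particular the argument does not use the fact that $u \in S$ beyond the fact that $u$ satisfies the conclusion of \prettyref{lem:5.5}.

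\textbf{Key steps.} Fix such a vertex $u$ and suppose, for contradiction, that there exist vertices $v_1, \ldots, v_r \in \mathcal{B}_u(1, 1 - \tfrac{1}{2r}, V)$ with $\{v_1, \ldots, v_r\} \in E$. By the definition of the ball (\prettyref{def:ball}) and SDP constraint \prettyref{eq:sdp3}, each pairwise vector satisfies
\[
\norm{x^{*}_{u, v_i}}^2 \;=\; \inprod{x^{*}_u, x^{*}_{v_i}} \;\geq\; 1 - \tfrac{1}{2r}, \qquad \forall i \in [r].
\]
Plugging these $r$ inequalities into the union-bound-style SDP constraint \prettyref{eq:union_bound} applied to the $(r+1)$-tuple $\{u, v_1, \ldots, v_r\}$ gives
\[
1 - \norm{x^{*}_{u, v_1, \ldots, v_r}}^2 \;\leq\; \sum_{i=1}^{r}\bigl(1 - \norm{x^{*}_{u, v_i}}^2\bigr) \;\leq\; r \cdot \tfrac{1}{2r} \;=\; \tfrac{1}{2},
\]
so $\norm{x^{*}_{u, v_1, \ldots, v_r}}^2 \geq \tfrac{1}{2}$, which is strictly positive.

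\textbf{Deriving the contradiction.} By the edge constraint \prettyref{eq:sdp2}, $\norm{x^{*}_{v_1, \ldots, v_r}}^2 = 0$, hence $x^{*}_{v_1, \ldots, v_r}$ is the zero vector. Consequently $\inprod{x^{*}_u, x^{*}_{v_1, \ldots, v_r}} = 0$. But \prettyref{eq:sdp3} identifies this inner product with $\norm{x^{*}_{u, v_1, \ldots, v_r}}^2$, forcing the latter to be $0$ and contradicting the lower bound of $\tfrac{1}{2}$ just established. (Equivalently, one may invoke \prettyref{eq:subset_constraint} with $I = \{v_1, \ldots, v_r\} \subseteq J = \{u, v_1, \ldots, v_r\}$ together with \prettyref{eq:sdp3} to reach the same equality.) Therefore $\{v_1, \ldots, v_r\}$ cannot be a hyperedge, and $\mathcal{B}_u(1, 1 - \tfrac{1}{2r}, V)$ is an independent set.

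\textbf{Main obstacle.} The substantive step is recognizing that the threshold $\mathcal{R} = 1 - \tfrac{1}{2r}$ was \emph{designed} to make the union-bound constraint \prettyref{eq:union_bound} yield a strictly positive lower bound on the $(r+1)$-level norm after summing $r$ pairwise deficits; any larger deficit per pair (say $\mathcal{R} = 1 - \tfrac{1}{r}$) would degrade this bound to $0$ and collapse the argument. Once this calibration is in place, the rest of the proof is a short chain of SDP deductions and requires no additional probabilistic analysis beyond \prettyref{lem:5.5}.
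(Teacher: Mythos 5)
Your proposal is correct and follows essentially the same route as the paper: both take the vertex $u$ from \prettyref{lem:5.5}, use the union-bound constraint \prettyref{eq:union_bound} with the calibrated threshold $1-\frac{1}{2r}$ to get $\norm{x^{*}_{u,v_1,\ldots,v_r}}^2 \geq \frac{1}{2} > 0$, and then transfer positivity to $\norm{x^{*}_{v_1,\ldots,v_r}}^2$ (the paper argues directly via \prettyref{eq:subset_constraint} and \prettyref{eq:sdp3}, whereas you phrase it as a contradiction with the zero vector forced by \prettyref{eq:sdp2} — an equivalent deduction). No substantive difference.
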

\begin{proof}
We consider the SDP constraint \prettyref{eq:union_bound} and apply it to our optimal solution $x^*$ . By using consistency constraints $\paren{\inprod{x_I,x_J} = \inprod{x_{I'}.x_{J'}}, \forall I \cup J = I' \cup J'}$ (equation \prettyref{eq:sdp3}) we can rewrite the constraint in \prettyref{eq:union_bound} as,
\begin{equation}
	\label{eq:union_bound_apply}
	1 -\norm{x^{*}_{u,i_1,\hdots,i_r}}^2  \leq \sum_{l \in [r]} \paren{{1 - \inprod{x^{*}_u,x^{*}_{i_l}}}}\mper
\end{equation}
For  $k \geq \dfrac{r2^{2r+3}e^{r}}{3p}$, if we pick any set of $r$ vertices $\set{i_1,\hdots,i_r} \in {V \choose {r}} $ in $\mathcal{B}_u\paren{1,1-\dfrac{1}{2r},V}$ (where $u$ is the vertex guaranteed in \prettyref{lem:5.5}) we know that $\inprod{x^{*}_u,x^{*}_{i_{l}}} \geq 1-\dfrac{1}{2r}, \forall l \in [r]$. By using equation \prettyref{eq:union_bound_apply} we have that,
\begin{equation}
	\label{eq:hyperedge_term_bound}
	\norm{x^{*}_{u,i_1,\hdots,i_r}}^2 \geq 1 -  \sum_{l \in [r]}{\paren{{1 - \inprod{x^{*}_u,x^{*}_{i_l}}}}} \geq 1 - \sum_{l \in [r]}{\dfrac{1}{2r}} \geq \dfrac{1}{2} > 0 \mper
\end{equation}
Now we examine the term $\norm{x_{i_1,i_2, \hdots,i_r}^*}^2$ for these $\set{i_!,\hdots,i_r}$ and we have that,
\begin{align*}
    \norm{x_{i_1,i_2, \hdots,i_r}^*}^2 = \inprod{x_{i_1}^*,x_{i_2 \hdots,i_r}^*} \geq \inprod{x_{i_1}^*,x_{u,i_2 \hdots,i_r}^*} = \norm{x_{u,i_1, \hdots,i_r}^*}^2  >0
\end{align*}
where the equality holds by consistency constraints, the first inequality above holds by constraint \prettyref{eq:subset_constraint} and the last inequality holds by equation \prettyref{eq:hyperedge_term_bound}. Hence for any $r$-tuple $\set{i_1,i_2,\hdots,i_r} \subseteq \mathcal{B}_u\paren{1,1 - \dfrac{1}{2r},V}$, we have $\norm{x^{*}_{i_1,i_2,\hdots,i_r}}^2 > 0$. Therefore by SDP constraint \prettyref{eq:sdp2}, it cannot form a hyperedge. Hence, the set of vertices in $\mathcal{B}_u\paren{1,1 - \dfrac{1}{2r},V}$ is an independent set.
\end{proof}

\begin{definition}
	Let $\mathcal{S}_u$ denote the set of vertices formed by the union of all vertices by reading off the indices from the tuples of the set, $\mathcal{B}_u(l,r,V)$.
\end{definition}

Now, we have all the ingredients to prove our main result. We present the complete algorithm below and the proof of \prettyref{thm:main2}.
\\
\noindent
\RestyleAlgo{ruled}
\begin{minipage}{\linewidth}
	\begin{algorithm}[H]
		\caption{}
		\label{alg:one}
		\begin{algorithmic}[1]
			\REQUIRE $H=(V, E)$, $l \in [r]$, and $\mathcal{R} \in (0, 1)$.
			\ENSURE A list of independent sets in $H$.
			\STATE Solve \prettyref{sdp:his}.
			\FORALL{$u \in V$}
			\STATE Initialize $\mathcal{S}_u$ denote the union of set of vertices from the tuples in $\mathcal{B}_{u}(l, \mathcal{R}, V)$.
			\STATE $\mathcal{S}_u' = \set{u} \cup \mathcal{S}_u$. If $\mathcal{S}_u'$ is not an independent set,\\ Set $\mathcal{S}_u' = \emptyset$ and skip this iteration.
			\FORALL{$v \in V \setminus \mathcal{S}_u$}
			\STATE Add vertex $v$ to $\mathcal{S}_u'$ if $\mathcal{S}_u' \cup \set{v}$ is an independent set. \label{step:greedy}
			\ENDFOR
			\ENDFOR
			\STATE Return $\set{\mathcal{S}_u'}_{u \in V}$.
		\end{algorithmic}
	\end{algorithm}
\end{minipage}

We set our parameters ($n,p,k,\e$) appropriately and show that the number of vertices in $\mathcal{B}_u$ along with the vertex $u$ (denoted by $\mathcal{S}'_u$) cover $1 - \e$ fraction of vertices in $S$.

\begin{theorem}[Formal version of \prettyref{thm:main2}]
	\label{thm:main2_formal}
	There exists a deterministic algorithm which takes as input $\e \in (0,1)$ and	
	an instance of \prettyref{def:model} satisfying 
 	\[
		k \geq \max\set{\dfrac{r2^{2r+2}e^{r}}{3p}, \dfrac{(2rf(r))^{1/(r-0.5)}n^{(r-1)/(r-0.5)}}{{\e^{1/(r-0.5)}}p^{1/(2r-1)}}} ,
	\]
	has running time $n^{\bigo{r}}$, and outputs an independent set of size at least $(1 - \e)k$, with high probability (over the randomness of the input).
\end{theorem}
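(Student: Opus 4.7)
The plan is to execute Algorithm~\ref{alg:one} with the parameter choices $l = 1$ and $\mathcal{R} = 1 - \nfrac{1}{2r}$, and to show that the iteration corresponding to the special vertex $u \in S$ produced by \prettyref{lem:5.5} and \prettyref{lem:5.6} yields an independent set $\mathcal{S}_u'$ of size at least $(1-\e)k$. Since the algorithm returns the entire list $\set{\mathcal{S}_u'}_{u \in V}$, the overall output has the desired property; if a single independent set is preferred, one can simply take the largest element of the list.

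First, fix $u \in S$ as the common vertex guaranteed by \prettyref{lem:5.5} and \prettyref{lem:5.6} (both rely on the existence argument in \prettyref{lem:5.3}). With the chosen $l$ and $\mathcal{R}$, the set $\mathcal{S}_u$ computed in Line~3 is exactly the set of vertices $v \in V$ with $\inprod{x_u^*, x_v^*} \geq 1 - \nfrac{1}{2r}$. By \prettyref{lem:5.6}, this is an independent set of the input hypergraph $H$; adjoining $u$ (which trivially satisfies the projection bound against itself) keeps it independent, so the check in Line~4 passes for this $u$ and the algorithm proceeds with $\mathcal{S}_u' = \set{u} \cup \mathcal{S}_u$.

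Next, I lower bound $\abs{\mathcal{S}_u'}$. By \prettyref{lem:5.5},
\[
\abs{\mathcal{S}_u \cap S} \geq (k-1)\paren{1 - \dfrac{2r f(r) n^{r-1}}{\sqrt{p}\, k^{r-0.5}}}\mper
\]
The assumption $k \geq (2r f(r))^{1/(r-0.5)} n^{(r-1)/(r-0.5)} / \paren{\e^{1/(r-0.5)} p^{1/(2r-1)}}$ is equivalent to $2r f(r) n^{r-1} / (\sqrt{p}\, k^{r-0.5}) \leq \e$, which gives $\abs{\mathcal{S}_u'} \geq 1 + (k-1)(1-\e) \geq (1-\e)k$. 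The greedy extension inside the inner loop can only increase $\abs{\mathcal{S}_u'}$, so the bound persists after Line~\ref{step:greedy}. The companion hypothesis $k \geq r 2^{2r+2} e^{r}/(3p)$ is exactly what \prettyref{cor:lower_bound_S}, \prettyref{lem:5.5}, and \prettyref{lem:5.6} demand in order to hold with high probability over the randomness of the input.

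Finally, for the running time: \prettyref{sdp:his} has $n^{\bigo{r}}$ vectors and constraints and can therefore be solved in $n^{\bigo{r}}$ time; the outer loop contributes a factor of $n$, and each independence check and greedy augmentation runs in $\poly(n, r)$ time, giving an overall running time of $n^{\bigo{r}}$. The only substantive content is the chain of structural lemmas in \prettyref{sec:grothendieck} and \prettyref{sec:wider_range_alg} that we are consuming as black boxes; assuming those, the proof of \prettyref{thm:main2_formal} reduces to the parameter bookkeeping above, which is why I do not expect any further obstacle.
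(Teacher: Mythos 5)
Your proposal is correct and follows essentially the same route as the paper's proof: run Algorithm~\ref{alg:one} with $l=1$ and $\mathcal{R}=1-\nfrac{1}{2r}$, invoke \prettyref{lem:5.5} for the size bound and \prettyref{lem:5.6} for independence, and verify that the hypothesis on $k$ makes $2rf(r)n^{r-1}/(\sqrt{p}\,k^{r-0.5})\leq\e$. The only difference is cosmetic — you make explicit the (correct) point that the special vertex $u$ in the two lemmas is the same one coming from \prettyref{lem:5.3}, which the paper leaves implicit.
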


\begin{proof}
\label{proof:thm_main2}
We run the Algorithm \ref{alg:one} with the inputs, $H,l=1$ and $\mathcal{R} = 1 - \dfrac{1}{2r}$ to get $\set{\mathcal{S}_u'}_{u \in V}$.
In \prettyref{lem:5.5} we show that
\[ \abs{\mathcal{B}_u\paren{1,1 - \dfrac{1}{2r}, S}} \geq 
	(k-1)\paren{1 - \dfrac{2rf(r)n^{r-1}}{\sqrt{p}k^{r-0.5}}}\mper \]
For a suitable choice of parameters we wish to have,
\begin{equation}
	\label{eq:set_params}
	\abs{\mathcal{B}_u\paren{1,1 - \dfrac{1}{2r},S}} \geq (k-1)(1-\e) \mper
\end{equation}
We can then  include the vertex $u$ to our independent set and we get
\begin{align*}
\abs{\mathcal{S}_u'}
&\geq \abs{\mathcal{S}_u} + 1
= \abs{\mathcal{B}_u\paren{1,1 - \dfrac{1}{2r},V}} + 1
\geq \abs{\mathcal{B}_u\paren{1,1 - \dfrac{1}{2r},S}} + 1\\
&\geq (k-1)(1-\e) + 1 \geq k(1-\e) \mper
\end{align*}
We note that by setting $k \geq \dfrac{(2rf(r))^{1/(r-0.5)}n^{(r-1)/(r-0.5)}}{{\e^{1/(r-0.5)}}p^{1/(2r-1)}}$, equation \prettyref{eq:set_params} is satisfied and hence we can recover an independent set of size $(1-\e)k$ for all $\e \in (0,1)$.
\end{proof}

\section{Algorithm for Exact Recovery of $S$}
\label{sec:full_recovery}

In this section, we will prove a formal version of \prettyref{thm:main} which is a generalization of Theorem 1.2 of \cite{McKenzie_2020} to $r$-uniform hypergraphs. We start by rewriting the lower bound on the SDP mass from the vectors in S, i.e.,  \prettyref{cor:lower_bound_S} into a form which is easier to work with.

Note that the \prettyref{lem:5.3}, tells us the that the projection of $(r-1)^{th}$ level vectors from $S$ have a large projection (close to 1), onto some vertex $u \in S$. This naturally suggests that we iterate over all vertices and consider the $(r-1)^{th}$ level vectors which have a large projection with the vertex. To ensure that union of such projected sets remain independent, we generalize the ideas in proof of Theorem 1.1 and Theorem 1.2 of \cite{McKenzie_2020} to higher level vectors and hyperedges (\prettyref{lem:4.4}, \prettyref{lem:4.5}, \prettyref{lem:4.6} and proof of \prettyref{thm:main}). We start by the following simple yet important lemma, where we show that there exists a constant value of $\mathcal{R}$ such that no two orthogonal vectors can belong to $\mathcal{B}_{u}(r-1,\mathcal{R}, T)$ for any $u \in V, T \subseteq V$.

\begin{lemma}\label{lem:4.4}
	Let $w$ be a fixed unit vector. Then for all $\mathcal{R} > \ffrac{1}{\sqrt{2}}$ and for any vector $y$ which satisfies $\norm{y} \leq 1$ and $\inprod{w, y} \geq \mathcal{R}$, every vector $z$ such that $\norm{z} \leq 1 \text{ and } \inprod{y, z} = 0$ must have $\inprod{w, z} < \mathcal{R}$.
\end{lemma}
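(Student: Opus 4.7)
The plan is a short two-dimensional geometric argument based on the fact that the orthogonality $\langle y,z\rangle = 0$ together with the norm bounds lets us read $w$ in an orthonormal basis containing normalized versions of $y$ and $z$, so a large component of $w$ along $y$ forces a small component along $z$.

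First I would dispose of the degenerate cases. If $y = 0$, then $\langle w,y\rangle = 0 < \mathcal{R}$, contradicting the hypothesis. If $z = 0$, then $\langle w,z\rangle = 0 < \mathcal{R}$ trivially. So we may assume both $y$ and $z$ are nonzero, in which case $\hat{y} \defeq y/\norm{y}$ and $\hat{z} \defeq z/\norm{z}$ are orthonormal. Decompose
\[
w = \alpha\hat{y} + \beta\hat{z} + w^{\perp}, \qquad \langle w^{\perp},\hat{y}\rangle = \langle w^{\perp},\hat{z}\rangle = 0,
\]
so that by Pythagoras $\alpha^2 + \beta^2 + \norm{w^{\perp}}^2 = \norm{w}^2 = 1$, and in particular $\alpha^2 + \beta^2 \le 1$.

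Next, I would translate the two hypotheses into constraints on $\alpha$ and $\beta$. From $\langle w,y\rangle \ge \mathcal{R}$ and $\norm{y}\le 1$, we get
\[
\alpha \;=\; \frac{\langle w,y\rangle}{\norm{y}} \;\ge\; \frac{\mathcal{R}}{\norm{y}} \;\ge\; \mathcal{R}.
\]
Hence $\beta^2 \le 1 - \alpha^2 \le 1 - \mathcal{R}^2$, so $|\beta| \le \sqrt{1-\mathcal{R}^2}$. On the other hand, $\langle w,z\rangle = \beta\norm{z}$ and $\norm{z}\le 1$, so $|\langle w,z\rangle| \le |\beta| \le \sqrt{1-\mathcal{R}^2}$.

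The threshold $\mathcal{R} > 1/\sqrt{2}$ is chosen so that $\mathcal{R}^2 > 1 - \mathcal{R}^2$, i.e., $\sqrt{1-\mathcal{R}^2} < \mathcal{R}$. Combining gives $\langle w,z\rangle \le |\langle w,z\rangle| < \mathcal{R}$, as required. I do not anticipate any real obstacle; the only thing to watch is that both norm inequalities $\norm{y}\le 1$ and $\norm{z}\le 1$ point in the favorable direction (the first inflates the lower bound on $\alpha$, the second deflates the upper bound on $\langle w,z\rangle$), and that the strict inequality $\mathcal{R} > 1/\sqrt{2}$ is precisely what is needed to convert the bound $\sqrt{1-\mathcal{R}^2}$ into a strict bound by $\mathcal{R}$.
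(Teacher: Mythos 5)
Your proof is correct, and it takes a genuinely different (in a sense, dual) route from the paper's. The paper argues by contradiction: it assumes $\inprod{w,z}\geq\mathcal{R}$, decomposes $y$ and $z$ into components parallel and orthogonal to $w$, bounds $\norm{y_{\perp}},\norm{z_{\perp}}\leq\sqrt{1-\mathcal{R}^2}$, and then uses $0=\inprod{y,z}=\inprod{w,y}\inprod{w,z}+\inprod{y_{\perp},z_{\perp}}$ together with Cauchy--Schwarz to get $\inprod{w,z}\leq(1-\mathcal{R}^2)/\mathcal{R}<\mathcal{R}$. You instead decompose $w$ in the orthonormal system $\set{\hat{y},\hat{z}}$ (orthonormality being exactly what $\inprod{y,z}=0$ buys after normalizing) and invoke Bessel's inequality $\alpha^2+\beta^2\leq 1$ to conclude $\abs{\inprod{w,z}}\leq\abs{\beta}\leq\sqrt{1-\mathcal{R}^2}<\mathcal{R}$ directly, with no contradiction hypothesis. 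Both arguments are elementary plane geometry and use the threshold $\mathcal{R}>\ffrac{1}{\sqrt{2}}$ in the same way, namely to ensure $\sqrt{1-\mathcal{R}^2}<\mathcal{R}$. Your version has the minor advantages of being direct, of giving an unconditional bound on $\abs{\inprod{w,z}}$ rather than one derived under the negated conclusion, and of explicitly disposing of the degenerate case $y=0$ (which the paper leaves implicit when it divides by $\inprod{w,y}$, though $y\neq 0$ is anyway forced by $\inprod{w,y}\geq\mathcal{R}>0$); the paper's version yields the slightly sharper conditional bound $(1-\mathcal{R}^2)/\mathcal{R}$, which is immaterial here since both bounds are strictly below $\mathcal{R}$ in the stated range.
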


\begin{proof}
	Let us suppose there is a vector $z$ which on the contrary does satisfy $\inprod{w, z} \geq \mathcal{R}$. First we decompose $y$ and $z$ as follows,
	\[	y = \inprod{w, y} w + y_{\perp} \qquad \textrm{and} \qquad
	z = \inprod{w, z} w + z_{\perp}
	\]
	where $\inprod{w, y_{\perp}} = \inprod{w, z_{\perp}} = 0$. Since $\norm{y}, \norm{z} \leq 1$ and $\norm{w} = 1$, we have
	\[ \norm{y_{\perp}} \leq \sqrt{1-\inprod{w, y}^2} \leq \sqrt{1-\mathcal{R}^2} \qquad \textrm{and} \qquad
	\norm{z_{\perp}} \leq \sqrt{1-\inprod{w, z}^2} \leq \sqrt{1-\mathcal{R}^2}\mper
	\]
	Using $0 = \inprod{y, z} = \inprod{w, y} \inprod{w, z} + \inprod{y_{\perp}, z_{\perp}}$,
	\[
	\inprod{w, z} = \dfrac{-\inprod{y_{\perp}, z_{\perp}}}{\inprod{w, y}} \leq \dfrac{{\norm{y_{\perp}}}\norm{z_{\perp}}}{\inprod{w, y}} \leq \dfrac{1-\mathcal{R}^2}{\mathcal{R}}\mper
	\]
	But note that for any $\mathcal{R} \in (\ffrac{1}{\sqrt{2}}, 1)$,
	\[
	\dfrac{1-\mathcal{R}^2}{\mathcal{R}} < \mathcal{R}\mper
	\]
	This is a contradiction to the fact that $\inprod{w, z} \geq \mathcal{R}$.
\end{proof}

For the rest of our discussion we pick $\mathcal{R} = 3/4$. We note that the value the $\mathcal{R} = 3/4$ is an arbitrary choice for the constant and is used for the purposes of presentation only and has no particular significance.

Thus \prettyref{lem:4.4} ensures that no two orthogonal vectors can lie in $\mathcal{B}_u(r-1,3/4,T)$ for any $u \in S, T \subseteq V$. Next, we give a lower bound on the size of $\mathcal{B}_u(r-1,3/4,S)$, which will tell us that a large number of the tuples lie within this set.

\begin{lemma}\label{lem:4.5}
	For $k \geq \dfrac{r2^{2r+2}e^{r}}{3p}$, there exists a vertex $u \in S$ such that,
	\[ \Abs{\mathcal{B}_u(r-1,3/4,S)} \geq {{k-1} \choose {r-1}} \paren{1-\dfrac{4f(r)n^{r-1}}{k^{r-0.5}\sqrt{p}}}\mper \]
\end{lemma}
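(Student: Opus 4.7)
The plan is to mimic the proof of \prettyref{lem:5.5}, but applied to $(r-1)$-level vectors (rather than $1$-level vectors) and with the threshold $\mathcal{R} = 3/4$ in place of $1 - \tfrac{1}{2r}$.

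First I would use \prettyref{lem:5.3}, which provides a vertex $u \in S$ such that
\[
\mathbb{E}_{\set{i_1,\hdots,i_{r-1}} \sim {S \setminus \set{u} \choose r-1}}\inprod{x^*_u, x^*_{i_1,\hdots,i_{r-1}}} \geq 1 - \frac{f(r)n^{r-1}}{k^{r-0.5}\sqrt{p}} ,
\]
under the assumed lower bound on $k$. Before invoking Markov, I would briefly note that the random variable $1 - \inprod{x^*_u, x^*_{i_1,\hdots,i_{r-1}}}$ is non-negative: by \prettyref{eq:subset_constraint}, $\inprod{x^*_u, x^*_{i_1,\hdots,i_{r-1}}} \geq \inprod{x^*_u, x^*_{\{u,i_1,\hdots,i_{r-1}\}}} = \norm{x^*_{\{u,i_1,\hdots,i_{r-1}\}}}^2 \geq 0$ by consistency \prettyref{eq:sdp3}, and the upper bound $\inprod{x^*_u, x^*_{i_1,\hdots,i_{r-1}}} \leq 1$ follows from \prettyref{eq:subset_constraint} together with $\norm{x^*_u}^2 = 1$.

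Next, apply Markov's inequality to this non-negative random variable with threshold $1 - \mathcal{R} = 1/4$:
\[
\mathbb{P}_{\set{i_1,\hdots,i_{r-1}} \sim {S \setminus \set{u} \choose r-1}}\Brac{\inprod{x^*_u, x^*_{i_1,\hdots,i_{r-1}}} < \tfrac{3}{4}} < \frac{f(r)n^{r-1}/(k^{r-0.5}\sqrt{p})}{1/4} = \frac{4f(r)n^{r-1}}{k^{r-0.5}\sqrt{p}} .
\]
Multiplying this by the total number of tuples ${k-1 \choose r-1}$ in ${S \setminus \set{u} \choose r-1}$ and taking the complement yields
\[
\Abs{\set{\set{i_1,\hdots,i_{r-1}} \in {S \setminus \set{u} \choose r-1} : \inprod{x^*_u, x^*_{i_1,\hdots,i_{r-1}}} \geq \tfrac{3}{4}}} \geq {k-1 \choose r-1}\paren{1 - \frac{4f(r)n^{r-1}}{k^{r-0.5}\sqrt{p}}} .
\]
Finally, since ${S \setminus \set{u} \choose r-1} \subseteq {S \choose r-1}$, the set on the left-hand side is contained in $\mathcal{B}_u(r-1, 3/4, S)$ (by \prettyref{def:ball}), which gives the claimed bound.

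No step looks to be a real obstacle; the only thing to be careful about is verifying the non-negativity needed for Markov, which is immediate from the SDP consistency and subset constraints. The proof is essentially a copy of \prettyref{lem:5.5} with $l = 1$ replaced by $l = r-1$ and with the looser threshold $3/4$ in place of $1 - 1/(2r)$; the looser threshold is what produces the multiplicative constant $4$ (instead of $2r$) in the error term.
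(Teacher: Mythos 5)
Your proposal is correct and follows essentially the same route as the paper: invoke \prettyref{lem:5.3} for the special vertex $u$, apply Markov's inequality to the non-negative variable $1-\inprod{x^*_u,x^*_{i_1,\hdots,i_{r-1}}}$ with $\mathcal{R}=3/4$, and convert the tail bound into a count over the ${k-1 \choose r-1}$ tuples. Your explicit check of non-negativity via constraints \prettyref{eq:subset_constraint} and \prettyref{eq:sdp3} is a harmless addition that the paper leaves implicit here (it states the analogous fact in \prettyref{lem:5.5}).
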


\begin{proof}
	By Markov's inequality for all $\mathcal{R} \in (0, 1)$ and where  $k \geq \dfrac{r2^{2r+3}e^{r}}{3p}$, by \prettyref{lem:5.3} there exists a vertex $u \in S$ such that,
	\begin{align*}
		\mathbbm{P}_{\set{i_1,i_2,\hdots,i_{r-1}} \sim {S \setminus \set{u} \choose r-1}}\left[1-\inprod{x^{*}_u,x^{*}_{i_1,i_2,\hdots,i_{r-1}}} > 1 - \mathcal{R}\right] &< \dfrac{1-\mathbbm{E}_{\set{i_1,i_2,\hdots,i_{r-1}} \sim S \setminus \set{u}}\inprod{x^{*}_u,x^{*}_{i_1,i_2,\hdots,i_{r-1}}}}{1-\mathcal{R}}\\
		&< \dfrac{f(r)n^{r-1}}{\paren{1-\mathcal{R}}k^{r-0.5}\sqrt{p}}\mper
	\end{align*}
We can rewrite the above expression as the fraction of $r-1$ tuples which satisfy  ${\inprod{x^{*}_u,x^{*}_{i_1,i_2,\hdots,i_{r-1}}} < \mathcal{R}}$, since the underlying distribution is the uniform distribution over all such tuples,
	\begin{align*}
		\abs{\set{i_1,i_2,\hdots,i_{r-1}} \in {S \setminus \set{u} \choose r-1} : \inprod{x^{*}_u,x^{*}_{i_1,i_2,\hdots,i_{r-1}}} < \mathcal{R} } &< {{k-1} \choose {r-1}} \paren{\dfrac{f(r)n^{r-1}}{\paren{1-\mathcal{R}}k^{r-0.5}\sqrt{p}}}.\\
		\therefore \abs{\set{i_1,i_2,\hdots,i_{r-1}} \in {S \setminus \set{u} \choose r-1} : \inprod{x^{*}_u,x^{*}_{i_1,i_2,\hdots,i_{r-1}}} \geq \mathcal{R} } &\geq {{k-1} \choose {r-1}} \paren{1-\dfrac{f(r)n^{r-1}}{\paren{1-\mathcal{R}}k^{r-0.5}\sqrt{p}}}\mper
	\end{align*}
Using $\mathcal{R} = 3/4$ finishes the proof of this claim.
\end{proof}
 
Recall that in \prettyref{thm:main}, we are aiming to recover $S$ and not just any independent set. For this, we also need that every vertex $v \in V \setminus S$, has at least one hyperedge forming with $\mathcal{B}_u(r-1,3/4,S)$. This will ensure that the set $\mathcal{B}_u(r-1,3/4,V)$ has tuples only from ${{S \setminus \set{u}} \choose r-1}$; we did show that this has a large size in \prettyref{lem:4.5}. We concretize these ideas in the next few lemmas.

\begin{lemma}\label{lem:4.6}
	For $k \geq \max\set{\dfrac{r2^{2r+2}e^{r}}{3p}, \dfrac{\paren{8f(r)}^{\flatfrac{1}{(r-0.5)}}n^{(r-1)/(r-0.5)}}{p^{3/(2r-1)}}, (r-1)\paren{\dfrac{16 \log n }{p}}^{\ffrac{1}{(r-1)}}}$ there exists a vertex $u \in S$ where $\forall v \in V \setminus S, \exists \; e \: =\paren{ \set{v_1,v_2,\hdots,v_{r-1}}\cup \set{v}} \in E $ such that $\set{v_1,v_2,\hdots,v_{r-1}} \in \mathcal{B}_u(r-1, 3/4,S)$ with high probability (over the randomness of the input).
\end{lemma}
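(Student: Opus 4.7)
The plan is to take the vertex $u \in S$ guaranteed by \prettyref{lem:4.5}, and show that for this same $u$, every $v \in V \setminus S$ has ``enough'' random hyperedges into $\binom{S \setminus \{u\}}{r-1}$ that they cannot all avoid the large set $\mathcal{B}_u(r-1, 3/4, S)$. In symbols, let $B := \mathcal{B}_u(r-1, 3/4, S)$ and, for each $v \in V \setminus S$, let $H_v := \{T \in \binom{S \setminus \{u\}}{r-1} : T \cup \{v\} \in E\}$ (counting only hyperedges from the random part; the monotone adversary only adds more, so this underestimates $H_v$ in the final graph). The desired conclusion is exactly $H_v \cap B \neq \emptyset$ for all $v \in V \setminus S$, which follows from $|H_v| > |\binom{S\setminus\{u\}}{r-1} \setminus B|$.

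First I would invoke \prettyref{lem:4.5} to get, with high probability, $|B| \geq \binom{k-1}{r-1}(1 - \epsilon_1)$ where $\epsilon_1 = 4f(r) n^{r-1}/(k^{r-0.5}\sqrt{p})$, so $|\binom{S\setminus\{u\}}{r-1} \setminus B| \leq \epsilon_1 \binom{k-1}{r-1}$. Note that this event depends on the SDP solution (hence on all random hyperedges), but is guaranteed w.h.p.\ regardless of the adversary by \prettyref{lem:total bound}.

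Next, for each $v \in V \setminus S$, the random variable $|H_v|$ is a sum of $\binom{k-1}{r-1}$ independent $\mathrm{Bern}(p)$ indicators (the relevant random cross-hyperedges), so $\E|H_v| = p\binom{k-1}{r-1}$. A standard Chernoff bound gives
\[
\Pr\!\left[|H_v| < \tfrac{1}{2} p\binom{k-1}{r-1}\right] \leq \exp\!\left(-\tfrac{1}{8} p \binom{k-1}{r-1}\right).
\]
The hypothesis $k \geq (r-1)(16\log n / p)^{1/(r-1)}$ implies $\binom{k-1}{r-1} \geq \left(\tfrac{k-1}{r-1}\right)^{r-1} \geq 16\log n / p$, so this probability is at most $n^{-2}$. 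A union bound over the at most $n$ choices of $v$ yields, w.h.p., $|H_v| \geq \tfrac{1}{2} p\binom{k-1}{r-1}$ for every $v \in V \setminus S$ simultaneously.

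Finally, intersecting the two high-probability events (which holds by a further union bound), we get
\[
|H_v \cap B| \;\geq\; |H_v| - \left|\tbinom{S\setminus\{u\}}{r-1} \setminus B\right| \;\geq\; \tbinom{k-1}{r-1}\!\left(\tfrac{p}{2} - \epsilon_1\right).
\]
This is strictly positive exactly when $p/2 > 4 f(r) n^{r-1}/(k^{r-0.5}\sqrt{p})$, i.e., when $k^{r-0.5} \geq 8 f(r) n^{r-1}/p^{3/2}$, which rearranges to the stated bound $k \geq (8f(r))^{1/(r-0.5)} n^{(r-1)/(r-0.5)}/p^{3/(2r-1)}$. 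The main subtlety (and the only real obstacle) is handling the dependence between $B$ and the $H_v$'s: both are functions of the same random hypergraph. This is resolved cleanly because each event holds w.h.p.\ on its own, so their intersection also holds w.h.p.\ by the union bound; no independence between the two events is needed for the final combination. The threshold on $k$ given by $k \geq r 2^{2r+2} e^r / (3p)$ is just the precondition inherited from applying \prettyref{lem:4.5}.
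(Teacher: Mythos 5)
Your proposal is correct and follows essentially the same route as the paper's proof: take the vertex $u$ from \prettyref{lem:4.5}, lower-bound each $v$'s degree into $\binom{S\setminus\{u\}}{r-1}$ by $\tfrac{p}{2}\binom{k-1}{r-1}$ via Chernoff plus a union bound over $v$, and observe that this exceeds the number of tuples outside $\mathcal{B}_u(r-1,3/4,S)$ precisely under the stated condition $k^{r-0.5}\geq 8f(r)n^{r-1}/p^{3/2}$. Your explicit remark that the two high-probability events need not be independent to be intersected is a point the paper leaves implicit, but the argument is the same.
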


\begin{proof}
As stated above, we are interested in the event that for each vertex in $v \in V \setminus S$, we have at least one hyperedge with $\mathcal{B}_u(r-1,3/4,S)$; where $u$ is the vertex guaranteed by \prettyref{lem:4.5}. This is implied by the event that every such $v$ participates in strictly more than ${k-1 \choose r-1} - \abs{\mathcal{B}_u\paren{r-1,3/4,S}}$ hyperedges in ${S \choose {r-1}}$. We want that,
\[ d_{\min} \defeq \min_{v \in V \setminus S} d(v)|_{S \choose {r-1}} > {k-1 \choose r-1} - \abs{\mathcal{B}_u\paren{r-1,3/4,S}} \mper \] which for  $k \geq \dfrac{r2^{2r+3}e^{r}}{3p}$, using \prettyref{lem:4.5} is implied by the event that
\begin{equation}
	\label{eq:min_degree}
	d_{\min} > {k-1 \choose r-1} - {{k-1} \choose {r-1}} \paren{1-\dfrac{4f(r)n^{r-1}}{k^{r-0.5}\sqrt{p}}} = {{k-1} \choose {r-1}} \paren{\dfrac{4f(r)n^{r-1}}{k^{r-0.5}\sqrt{p}}}	\mper
\end{equation}
Note that for $v \in V \setminus S,$ $\mathbbm{E}[d(v)|_{S \choose {r-1}}] = p{k-1 \choose r-1}$, and since expectation is always larger than the minimum,  we must have that,
\begin{align}
	\label{eq:lower_bound_p}
	p{k-1 \choose r-1} >  d_{min} > {{k-1} \choose {r-1}} \paren{\dfrac{4f(r)n^{r-1}}{k^{r-0.5}\sqrt{p}}}
	\iff p > \dfrac{4f(r)n^{r-1}}{k^{r-0.5}\sqrt{p}}\mper
\end{align}
which is true since  $k \geq \dfrac{\paren{8f(r)}^{\flatfrac{1}{(r-0.5)}}n^{(r-1)/(r-0.5)}}{p^{3/(2r-1)}}$.
To show that this event happens with high probability, we use the Chernoff bound (\prettyref{fact:chernoff}) followed by a union bound over all $v \in V \setminus S$ to get,
\begin{align}
\mathbb{P}\left[\exists v \in V \setminus S : d(v)|_{S \choose {r-1}} \leq(1-\e)p{{k-1} \choose {r-1}}\right]
	& \leq (n-k)\exp\paren{\frac{-\e^2{(1/(r-1))^{r-1}}pk^{r-1}}{2}} \nonumber \\
 & \leq n\exp\paren{\frac{-\e^2{(1/(r-1))^{r-1}}pk^{r-1}}{2}}  \label{eq:degree_lower_bound}
\end{align}
\noindent 
where we used \prettyref{fact:sterling} and the fact that $r \geq 2$. The bound in equation  \prettyref{eq:degree_lower_bound} holds with high probability for $\e=1/2$ and $k \geq (r-1) \paren{\dfrac{16 \log n }{p}}^{\ffrac{1}{(r-1)}}\mper$

\noindent
Therefore, for $k \geq \max\set{\dfrac{r2^{2r+2}e^{r}}{3p}, \dfrac{\paren{8f(r)}^{\flatfrac{1}{(r-0.5)}}n^{(r-1)/(r-0.5)}}{p^{3/(2r-1)}}, (r-1)\paren{\dfrac{16 \log n }{p}}^{\ffrac{1}{(r-1)}}}$ and $\e = 1/2$, each vertex in $v \in V \setminus S$ has at least one hyperedge with some tuple in $\mathcal{B}_u(r-1, 3/4, S)$, with high probability.
\end{proof}

We note that these results hold true even if allow a monotone adversary to add hyperedges in $\partial(S)$ since it can only increase $d(v)$.

\begin{Claim}
	\label{lem:new_constraint_4}
	Let $I, J, I', J'$ be any non-empty subsets of $V$ which satisfy $\abs{I \cup J}, \abs{I' \cup J'} \leq r+1, \text{ and } I \cup J = I' \cup J'$, then,
	\[
		\inprod{x_I^*,x_J^*} = \inprod{x_{I'}^*,x_{J'}^*} \mper
	\]
\end{Claim}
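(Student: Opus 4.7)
The plan is that this claim is a direct consequence of the SDP consistency constraint \prettyref{eq:sdp3}, which states that $\inprod{x_I, x_J} = \norm{x_{I \cup J}}^2$ for every pair of nonempty subsets $I, J \subseteq V$ with $\abs{I \cup J} \leq r+1$. There is essentially nothing more to do beyond invoking this constraint twice.

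Concretely, I would first observe that the pair $(I, J)$ satisfies the hypothesis of constraint \prettyref{eq:sdp3}: both are non-empty and $\abs{I \cup J} \leq r+1$. Applying the constraint to the optimal solution yields $\inprod{x_I^*, x_J^*} = \norm{x_{I \cup J}^*}^2$. Similarly, the pair $(I', J')$ satisfies the same hypothesis, so $\inprod{x_{I'}^*, x_{J'}^*} = \norm{x_{I' \cup J'}^*}^2$.

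The assumption $I \cup J = I' \cup J'$ then immediately gives $\norm{x_{I \cup J}^*}^2 = \norm{x_{I' \cup J'}^*}^2$, and chaining the three equalities finishes the proof. There is no real obstacle here; the only thing worth flagging is that the claim is exactly the statement that the inner product $\inprod{x_I^*, x_J^*}$ depends only on the union $I \cup J$ and not on how it is partitioned between $I$ and $J$, which is precisely the content of the Lasserre/SoS-style consistency constraint we imposed. This property has already been used implicitly in the derivation of \prettyref{eq:union_bound_apply} and elsewhere in the paper, and stating it as a claim here just makes that usage explicit.
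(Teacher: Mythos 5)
Your proof is correct and matches the paper's own argument exactly: both apply constraint \prettyref{eq:sdp3} to each pair and chain the equalities through $\norm{x_{I \cup J}^*}^2 = \norm{x_{I' \cup J'}^*}^2$. Nothing further is needed.
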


\begin{proof}\label{app:new_constraint_4_proof}
	By SDP constraint \ref{eq:sdp3},
	\[
		\inprod{x_I^*,x_J^*} = \norm{x_{I \cup J}^*}^2 = \norm{x_{I' \cup J'}^*}^2 = \inprod{x_{I'}^*,x_{J'}^*} \mper
	\]
\end{proof}

\begin{lemma}
	\label{lem:4.7}
For $k \geq \max\set{\dfrac{r2^{2r+2}e^{r}}{3p}, \dfrac{\paren{8f(r)}^{\flatfrac{1}{(r-0.5)}}n^{(r-1)/(r-0.5)}}{p^{3/(2r-1)}}, (r-1)\paren{\dfrac{16 \log n }{p}}^{\ffrac{1}{(r-1)}}}$, there exists a vertex $u \in S$ (the vertex guaranteed by \prettyref{lem:4.6}) such that $\mathcal{S}_u \subseteq S$  and $\mathcal{B}_u(r-1,3/4,V) = \mathcal{B}_u(r-1,3/4,S)$ with high probability (over the randomness of the input).
\end{lemma}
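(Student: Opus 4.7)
The plan is to prove the containment $\mathcal{B}_u(r-1, 3/4, V) \subseteq \mathcal{B}_u(r-1, 3/4, S)$; the reverse inclusion is immediate from $S \subseteq V$ and \prettyref{def:ball}, and $\mathcal{S}_u \subseteq S$ then follows by reading off the indices of these $(r-1)$-tuples.

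First I would fix $u \in S$ to be the vertex guaranteed by \prettyref{lem:4.6} and argue by contradiction: suppose some $T = \set{v_1, \dots, v_{r-1}} \in \mathcal{B}_u(r-1, 3/4, V)$ contains a vertex from $V \setminus S$; without loss of generality $v_1 \in V \setminus S$. Applying \prettyref{lem:4.6} to $v_1$ produces a hyperedge $e = \set{v_1} \cup T' \in E$ with $T' \in \mathcal{B}_u(r-1, 3/4, S)$; in particular $\inprod{x_u^*, x_{T'}^*} \geq 3/4$. Since $T' \subseteq S$ and $v_1 \notin S$, we have $v_1 \notin T'$, so $\abs{e} = r$.

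Next I would extract two key facts from the SDP constraints. Applying \prettyref{eq:subset_constraint} with $I = \set{v_1} \subseteq J = T$ (valid since $\abs{J} = r-1 \leq r+1$) gives $\inprod{x_u^*, x_{v_1}^*} \geq \inprod{x_u^*, x_T^*} \geq 3/4$. Applying the consistency constraint \prettyref{eq:sdp3} with $I = \set{v_1}, J = T'$ (valid since $\abs{I \cup J} = r \leq r+1$) together with \prettyref{eq:sdp2} yields $\inprod{x_{v_1}^*, x_{T'}^*} = \norm{x_e^*}^2 = 0$. Also, \prettyref{eq:sdp1} gives $\norm{x_{v_1}^*} = 1$, and choosing any $w \in T'$ in \prettyref{eq:subset_constraint} gives $\norm{x_{T'}^*}^2 = \inprod{x_w^*, x_{T'}^*} \leq \inprod{x_w^*, x_w^*} = 1$.

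Finally I would invoke \prettyref{lem:4.4} with $w = x_u^*$, $y = x_{v_1}^*$, $z = x_{T'}^*$ and threshold $\mathcal{R} = 3/4 > 1/\sqrt{2}$: the hypotheses $\norm{y} \leq 1$, $\norm{z} \leq 1$, $\inprod{w,y} \geq \mathcal{R}$, and $\inprod{y,z} = 0$ all hold, so the lemma forces $\inprod{x_u^*, x_{T'}^*} < 3/4$, contradicting $T' \in \mathcal{B}_u(r-1, 3/4, S)$. Hence every tuple in $\mathcal{B}_u(r-1, 3/4, V)$ lies inside $S$, proving both claims of the lemma. The only subtlety to watch is keeping all set sizes within the allowed range $r+1$ when invoking \prettyref{eq:subset_constraint} and \prettyref{eq:sdp3}; this works out cleanly because $\abs{T} = \abs{T'} = r-1$ and $\abs{e} = r$, so every application is within the higher-order SDP's scope.
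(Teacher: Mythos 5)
Your proof is correct, and it follows the same overall strategy as the paper's: assume a tuple of $\mathcal{B}_u(r-1,3/4,V)$ contains some $v_1 \in V\setminus S$, use \prettyref{lem:4.6} to produce a hyperedge $e=\set{v_1}\cup T'$ with $T'\in\mathcal{B}_u(r-1,3/4,S)$, extract an orthogonality from $\norm{x_e^*}=0$, and contradict \prettyref{lem:4.4}. The difference is in which pair of vectors you make orthogonal. The paper shows that the two $(r-1)$-level vectors $x_I^*$ (the offending tuple) and $x_{T'}^*$ are orthogonal, by writing $\inprod{x_I^*,x_{T'}^*}=\inprod{x_{I\setminus\set{v_1}}^*,x_e^*}$ via \prettyref{lem:new_constraint_4}; but that claim requires $\abs{I\cup T'}\leq r+1$, and here $\abs{I\cup T'}$ can be as large as $2r-2$, so for $r\geq 4$ the paper's step is outside the stated scope of constraint \prettyref{eq:sdp3} (the vector $x_{I\cup T'}$ need not even exist in the SDP). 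You instead show $\inprod{x_{v_1}^*,x_{T'}^*}=\norm{x_e^*}^2=0$, where the union has size exactly $r$, and pay for this with one extra application of \prettyref{eq:subset_constraint} to transfer the projection bound from $x_T^*$ down to $x_{v_1}^*$ (giving $\inprod{x_u^*,x_{v_1}^*}\geq 3/4$) before invoking \prettyref{lem:4.4} with $y=x_{v_1}^*$, $z=x_{T'}^*$. Every set size stays within $r+1$, so your variant is the more careful one and in fact repairs a small gap in the paper's own argument; the conclusion and parameter regime are unchanged.
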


\begin{proof}
Let $u$ be the vertex guaranteed by \prettyref{lem:4.6}.
We show that for any $(r-1)$-tuple of vertices in $\mathcal{B}_u(r-1, 3/4,V)$, each of these vertices lie in $S$. Suppose there is a $(r-1)$-tuple $I = \set{i_1,i_2,\hdots,i_{r-1}} \not\subset S$ such that $i_1 \in V \setminus S$ and $I \in \mathcal{B}_u(r-1, 3/4,V)$. Now from \prettyref{lem:4.6}, we know that with high probability for this vertex $u \in S$ there exists a hyperedge $e = \set{i_1, j_2, j_3, \hdots, j_{r}}$ such that $J = \set{j_2, j_3, \hdots, j_{r}} \in \mathcal{B}_u(r-1, 3/4,S)$. By \prettyref{lem:new_constraint_4} we have that,
\[\inprod{x_I^*,x_{J}^*} = \inprod{x_{I \setminus \set{i_1}}^*,x_{J \cup \set{i_1}}^*} = \inprod{x_{I \setminus \set{i_1}}^*,x_{e}^*} \mper\]

We know that $\norm{x_{e}^*} = 0$ and thus $x_{I}^*$ is orthogonal to $x_J^*$. Thus by \prettyref{lem:4.4} and our choice of $\mathcal{R}~(=3/4)$, the set $I$ cannot lie inside $\mathcal{B}_u(r-1,3/4,V)$. Thus the only tuples that can lie inside $\mathcal{B}_u(r-1,3/4,V)$ are the tuples of $\mathcal{B}_u(r-1,3/4,S)$. Hence with high probability (over the randomness of the input), $\mathcal{B}_u(r-1,3/4,V) = \mathcal{B}_u(r-1,3/4,S)$. 
Since the vertices in $\mathcal{S}_u$ are formed by taking the union over vertices in the tuples inside $\mathcal{B}_u(r-1,3/4,V)$ we only have vertices from $S$. Therefore, $S_u \subseteq S$ is an independent set.
\end{proof}

\begin{theorem}[Formal version of \prettyref{thm:main}]
	\label{thm:main_formal}
	There exists a deterministic algorithm which takes as input 	
	an instance of \prettyref{def:model} satisfying
	\[ k \geq \max\set{\dfrac{r2^{2r+2}e^{r}}{3p}, \dfrac{\paren{8f(r)}^{\flatfrac{1}{(r-0.5)}}n^{(r-1)/(r-0.5)}}{p^{3/(2r-1)}}, (r-1)\paren{\dfrac{16 \log n }{p}}^{\ffrac{1}{(r-1)}}}, \]
	has running time $n^{\bigo{r}}$, and outputs a list of atmost $n$ independent sets, one of which is $S$,  with high probability (over the randomness of the input).
\end{theorem}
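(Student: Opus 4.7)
The plan is to run Algorithm~\ref{alg:one} on the input hypergraph with parameters $l = r-1$ and $\mathcal{R} = 3/4$, and to argue that among the (at most) $n$ independent sets $\{\mathcal{S}_u'\}_{u \in V}$ it outputs, exactly one of them, corresponding to the ``good'' vertex $u \in S$ guaranteed by \prettyref{lem:4.6} and \prettyref{lem:4.7}, will coincide with the planted set $S$ with high probability.

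The first step is to fix attention on the vertex $u \in S$ produced by \prettyref{lem:4.6}, noting that the parameter regime in the theorem statement is precisely the conjunction of hypotheses needed by \prettyref{lem:4.5}, \prettyref{lem:4.6}, and \prettyref{lem:4.7}. By \prettyref{lem:4.7}, with high probability $\mathcal{B}_u(r-1, 3/4, V) = \mathcal{B}_u(r-1, 3/4, S)$, so $\mathcal{S}_u \subseteq S$, and hence $\mathcal{S}_u' = \{u\} \cup \mathcal{S}_u \subseteq S$ is an independent set, so the algorithm does not discard it in its initial check.

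The crux of the argument is then to show that the greedy extension loop (\prettyref{step:greedy}) grows $\mathcal{S}_u'$ into all of $S$, adding every vertex of $S \setminus \mathcal{S}_u'$ and never a vertex of $V \setminus S$. For any $v \in V \setminus S$, \prettyref{lem:4.6} supplies an $(r-1)$-tuple $\{v_1, \ldots, v_{r-1}\} \in \mathcal{B}_u(r-1, 3/4, S)$ such that $\{v, v_1, \ldots, v_{r-1}\} \in E$; since the vertices $v_1, \ldots, v_{r-1}$ all lie in the initial $\mathcal{S}_u \subseteq \mathcal{S}_u'$, this hyperedge remains contained in $\mathcal{S}_u' \cup \{v\}$ throughout the loop, so $v$ is never added. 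Consequently, at every iteration $\mathcal{S}_u' \subseteq S$ holds invariantly, and since $S$ itself is independent, every $v \in S \setminus \mathcal{S}_u'$ passes the independence test and is added. Therefore, at termination, $\mathcal{S}_u' = S$. The main obstacle here is precisely this ``persistence'' observation: the hyperedge witnessing that $v \in V \setminus S$ should be rejected must exist in the already-committed part $\mathcal{S}_u$ and not rely on vertices added later, which is exactly what \prettyref{lem:4.6} provides.

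Finally, since the algorithm iterates over every $u \in V$, the output list has size at most $n$ and contains $S$ for the good vertex, completing the correctness. For the running time, \prettyref{sdp:his} has $n^{O(r)}$ vectors and $n^{O(r)}$ constraints so it can be solved to sufficient accuracy in $n^{O(r)}$ time; each of the $n$ outer iterations constructs $\mathcal{B}_u(r-1, 3/4, V)$ in $n^{O(r)}$ time (by examining all $(r-1)$-tuples) and performs at most $n$ independence checks each costing $O(n^r)$, giving total running time $n^{O(r)}$ as claimed.
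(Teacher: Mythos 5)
Your proposal is correct and follows essentially the same route as the paper's proof: run Algorithm~\ref{alg:one} with $l = r-1$ and $\mathcal{R} = 3/4$, invoke \prettyref{lem:4.7} to get $\mathcal{S}_u \subseteq S$ for the good vertex $u$, and use \prettyref{lem:4.6} to show the greedy step adds all of $S$ and rejects every $v \in V \setminus S$ because its witnessing hyperedge already lies in $\mathcal{S}_u$. Your explicit ``persistence'' remark is a nice clarification of a point the paper leaves implicit, but the argument is the same.
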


\begin{proof}
	We run the Algorithm \ref{alg:one} with the inputs, $H$, $l = r-1$, and $\mathcal{R} = \flatfrac{3}{4}$ to get $\set{\mathcal{S}_u'}_{u \in V}$. \prettyref{lem:4.7} guarantees that there exists a vertex $u$ such that ${\mathcal{S}_u}$ is a subset of our planted independent set $S$ with high probability (over the randomness of the input). The greedy step (step \ref{step:greedy}) in Algorithm \ref{alg:one}, which tries to add each vertex from $V \setminus \mathcal{S}_u$ to $\mathcal{S}_u'$, helps us to recover the remaining vertices of our planted independent set $S$. Note that by \prettyref{lem:4.6}, no vertex of $V \setminus S$ can be added to $\mathcal{S}_u'$ since for any vertex $v \in V \setminus S$, there exists a hyperedge containing $v$ and a subset of vertices from $\mathcal{S}_u$. Hence, at the end of greedy step we will completely recover $S$. Since the time to solve the SDP is $n^{\bigo{r}}$, the running time is $n^{\bigo{r}}$.
\end{proof}

\paragraph{Acknowledgements.}
YK and RP thank Theo McKenzie for helpful discussions. AL was supported in part by SERB Award ECR/2017/003296 and a Pratiksha Trust Young Investigator Award. 

\bibliographystyle{alpha}
\bibliography{bibfile}

\appendix
\section{Some standard inequalities}
\label{app:inequalitites}

\begin{fact}[Bounds on Binomial Coefficient, Appendix C - \cite{10.5555/1614191}]
	\label{fact:sterling}
	For $1 \leq k \leq n,$
	\[ \paren{\dfrac{n}{k}}^k \leq {n \choose k} \leq \paren{\dfrac{en}{k}}^k \mper\]
\end{fact}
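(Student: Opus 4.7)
The plan is to prove the two inequalities separately, both by direct manipulation of the factorial expression $\binom{n}{k} = \frac{n(n-1)\cdots(n-k+1)}{k!}$.

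For the lower bound $(n/k)^k \leq \binom{n}{k}$, I would rewrite the binomial coefficient as the product $\binom{n}{k} = \prod_{i=0}^{k-1} \frac{n-i}{k-i}$ and argue termwise that each factor satisfies $\frac{n-i}{k-i} \geq \frac{n}{k}$. The latter inequality is equivalent to $k(n-i) \geq n(k-i)$, i.e.\ $ni \geq ki$, which holds because $n \geq k$ and $i \geq 0$. Taking the product over $i = 0, 1, \ldots, k-1$ then gives $\binom{n}{k} \geq (n/k)^k$.

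For the upper bound $\binom{n}{k} \leq (en/k)^k$, the key ingredient is the inequality $k! \geq (k/e)^k$. This I would derive from the Taylor series $e^k = \sum_{i \geq 0} k^i/i! \geq k^k/k!$, which rearranges to $k! \geq k^k / e^k = (k/e)^k$. Combining this with the trivial bound $n(n-1)\cdots(n-k+1) \leq n^k$ yields
\[
\binom{n}{k} = \frac{n(n-1)\cdots(n-k+1)}{k!} \leq \frac{n^k}{(k/e)^k} = \left(\frac{en}{k}\right)^k.
\]

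Both steps are entirely elementary, so there is no genuine obstacle; the only thing to be mildly careful about is the termwise comparison in the lower bound (ensuring the monotonicity direction is correct when $n \geq k$) and citing the $k! \geq (k/e)^k$ estimate cleanly from the exponential series. Since the statement is a well-known textbook fact, a short two-line argument for each inequality suffices and no additional machinery from the paper is needed.
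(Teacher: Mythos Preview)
Your proposal is correct and is essentially the standard textbook argument for these bounds. Note that the paper itself does not supply a proof of this fact at all: it is merely stated and attributed to Appendix~C of the cited reference, so there is no in-paper proof to compare against. Your derivation (termwise comparison $\frac{n-i}{k-i}\geq\frac{n}{k}$ for the lower bound, and $k!\geq (k/e)^k$ from the exponential series for the upper bound) is exactly the argument one finds in that reference, so nothing further is needed.
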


\begin{fact}[Chernoff bound (Multiplicative); Theorem 4.5 (Part 2) - \cite{mitzenmacher2017probability}]
	\label{fact:chernoff}
	Let $X_1, X_2, \hdots, X_n$ be i.i.d. bernoulli variables such that $\mu = \E[X_i]$, for all $i$. Then for any $\delta \in (0, 1)$,
	\[
	\ProbOp\left[{\sum_{i=1}^{n} X_i} < (1-\delta) \mu \right] \leq \exp\paren{-\dfrac{\mu\delta^2}{2}}\mper
	\]
\end{fact}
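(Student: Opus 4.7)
The plan is to use the standard Chernoff technique: an exponential form of Markov's inequality combined with the independence of the $X_i$'s to turn tail probabilities into moment generating functions. Throughout, interpret $\mu$ as $\E[\sum_{i=1}^n X_i]$ (so $\mu = np$ where $p = \E[X_i]$), which is the standard convention for this bound.

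First, for any $t > 0$, I would write
\[
\Pr\left[\sum_{i=1}^n X_i < (1-\delta)\mu\right] = \Pr\left[e^{-t \sum_i X_i} > e^{-t(1-\delta)\mu}\right] \leq e^{t(1-\delta)\mu} \cdot \E\left[e^{-t \sum_i X_i}\right],
\]
by Markov's inequality applied to the nonnegative random variable $e^{-t \sum_i X_i}$. By independence, the MGF factors as $\prod_i \E[e^{-t X_i}]$, and for a Bernoulli($p$) variable,
\[
\E[e^{-t X_i}] = 1 - p + p e^{-t} = 1 - p(1 - e^{-t}) \leq \exp\bigl(-p(1 - e^{-t})\bigr),
\]
using $1 + x \leq e^x$. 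Multiplying over $i$ gives $\E[e^{-t \sum_i X_i}] \leq \exp(-\mu(1 - e^{-t}))$.

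Next, I would optimize over $t$. The combined bound is $\exp\bigl(t(1-\delta)\mu - \mu(1 - e^{-t})\bigr)$. Differentiating the exponent in $t$ and setting it to zero yields $t = -\ln(1-\delta)$, which is positive for $\delta \in (0,1)$. Substituting this value produces the classical bound
\[
\Pr\left[\sum_{i=1}^n X_i < (1-\delta)\mu\right] \leq \left(\frac{e^{-\delta}}{(1-\delta)^{1-\delta}}\right)^{\mu}.
\]

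The remaining step, which is the only slightly delicate piece, is to show that this implies the cleaner bound $\exp(-\mu \delta^2 / 2)$. It suffices to verify $(1-\delta)\ln(1-\delta) \geq -\delta + \delta^2/2$ for $\delta \in (0,1)$, equivalently $f(\delta) := (1-\delta)\ln(1-\delta) + \delta - \delta^2/2 \geq 0$ on $(0,1)$. I would check $f(0)=0$ and compute $f'(\delta) = -\ln(1-\delta) - \delta$, which is nonnegative on $[0,1)$ (its derivative $\delta/(1-\delta)$ is nonnegative and $f'(0)=0$), so $f$ is nondecreasing and hence nonnegative. This gives
\[
\frac{e^{-\delta}}{(1-\delta)^{1-\delta}} \leq \exp(-\delta^2/2),
\]
and raising to the power $\mu$ completes the proof. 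The main (minor) obstacle is this last calculus check, which is really the reason the bound takes the clean Gaussian-like form.
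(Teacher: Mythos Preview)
Your proof is correct and is the standard derivation of the multiplicative Chernoff lower-tail bound. However, the paper does not actually prove this statement: it is listed in the appendix as a cited fact (Theorem~4.5, Part~2 of Mitzenmacher--Upfal) with no accompanying argument. So there is nothing to compare your approach against; you have simply supplied the textbook proof that the paper omits by citation.

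One small remark: you were right to reinterpret $\mu$ as $\E[\sum_i X_i]$ rather than $\E[X_i]$. The paper's statement writes ``$\mu = \E[X_i]$ for all $i$,'' which is a minor slip; the way the bound is invoked later (in \prettyref{lem:4.6}, with $\mu = p\binom{k-1}{r-1}$) confirms that the intended meaning is the expectation of the sum.
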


\begin{fact}[Bernstein inequality; Equation (1.29) - \cite{wasserman2006all}]
	\label{fact:bern}
	Let $X_1, X_2, \hdots, X_n$ be independent, zero mean random variables such that $\abs{X_i} \leq c$, for all $i$. Then for all $t > 0$,
	\[
	\ProbOp\left[\abs{\sum_{i=1}^{n} X_i} > t \right] \leq 2\exp\paren{-\dfrac{t^2}{2v + \ffrac{2ct}{3}}}
	\]
	where $v \geq \sum_{i=1}^{n} \var{X_i}$.
\end{fact}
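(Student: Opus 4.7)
The plan is to apply Algorithm~\ref{alg:one} with parameters $l = r-1$ and $\mathcal{R} = 3/4$. The algorithm iterates over each vertex $u \in V$ and constructs a candidate independent set $\mathcal{S}_u' = \set{u} \cup \mathcal{S}_u$, where $\mathcal{S}_u$ is the union of vertex indices appearing in tuples of $\mathcal{B}_u(r-1, 3/4, V)$, and then greedily augments this set. The goal is to show that for at least one $u$, the resulting $\mathcal{S}_u'$ equals $S$ exactly.

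First, I would invoke \prettyref{lem:4.7} which guarantees (under our lower bound on $k$) that there exists a special vertex $u \in S$ such that $\mathcal{B}_u(r-1, 3/4, V) = \mathcal{B}_u(r-1, 3/4, S)$ with high probability. In particular, every $(r-1)$-tuple appearing with projection at least $3/4$ onto $x_u^*$ consists entirely of vertices from $S$, so $\mathcal{S}_u \subseteq S$. Since $S$ is itself an independent set in the planted model, $\mathcal{S}_u' = \{u\} \cup \mathcal{S}_u \subseteq S$ is independent, so the iteration corresponding to this $u$ is not skipped.

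Next, I would show that the greedy step (\prettyref{step:greedy}) correctly fills in $\mathcal{S}_u'$ to recover $S$ exactly. For any vertex $w \in S \setminus \mathcal{S}_u$, adding $w$ keeps the set inside $S$ and hence independent, so $w$ will be added. Conversely, for any $v \in V \setminus S$, \prettyref{lem:4.6} guarantees the existence of a hyperedge $e = \{v_1, \ldots, v_{r-1}\} \cup \{v\} \in E$ with $\{v_1, \ldots, v_{r-1}\} \in \mathcal{B}_u(r-1, 3/4, S) \subseteq \mathcal{S}_u$, so adding $v$ would create a completely contained hyperedge, and the greedy step will reject $v$. Therefore at the end of this iteration $\mathcal{S}_u' = S$.

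Since we do not know the special vertex $u$ in advance, the algorithm runs over all $n$ choices of $u$, producing a list of at most $n$ independent sets, one of which is $S$ with high probability. For the runtime, solving \prettyref{sdp:his} takes $n^{\bigo{r}}$ time since it has $n^{\bigo{r}}$ variables and constraints, the outer loop runs $n$ times, and for each $u$ computing $\mathcal{B}_u(r-1, 3/4, V)$ and the greedy pass requires only $n^{\bigo{r}}$ arithmetic operations on SDP inner products. The dominant cost is the SDP, yielding the claimed $n^{\bigo{r}}$ running time. The main obstacle has already been handled in the preceding lemmas: the delicate step is \prettyref{lem:4.7}, which uses the union-bound and consistency SDP constraints together with \prettyref{lem:4.4} (no two orthogonal vectors both lie in the ball of radius $3/4$) to rule out any $(r-1)$-tuple containing a vertex of $V \setminus S$; once this is in hand, the proof of the theorem reduces to assembling the above pieces.
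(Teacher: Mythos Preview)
Your proposal does not address the stated result at all. The statement in question is \prettyref{fact:bern}, the Bernstein concentration inequality, which the paper does not prove but simply cites as a standard fact from \cite{wasserman2006all}. What you have written is instead a proof of \prettyref{thm:main_formal} (exact recovery of $S$ via Algorithm~\ref{alg:one} with $l=r-1$ and $\mathcal{R}=3/4$), and indeed your argument matches the paper's proof of that theorem essentially line for line.

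If the intent was to justify \prettyref{fact:bern}, note that no proof is expected here: it is a textbook tail bound for sums of bounded, mean-zero, independent random variables, and the paper treats it as a black box. If the intent was to prove \prettyref{thm:main_formal}, then your write-up is correct and follows the same approach as the paper; but you should relabel which statement you are proving.
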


\begin{fact}[Grothendieck's inequality; Equation (3.1) and (3.2) - \cite{gudon2014community}]
	\label{fact:groth}
	Consider an $n \times n$ real matrix $M$ then,
	\[
	\max_{\set{x_i}_{i=1}^{n}, \set{y_j}_{j=1}^{n} \subseteq B_{2}^{n}} \abs{\sum_{i, j = 1}^{n} M_{ij} \inprod{x_i, y_j}} \leq 2 \max_{\set{\alpha_i}_{i=1}^{n}, \set{\beta_j}_{j=1}^{n} \in \set{-1, 1}} \abs{\sum_{i, j = 1}^{n} M_{ij} \alpha_i \beta_j}
	\]
	where $B_{2}^{n} = \set{x \in \R^n : \norm{x}_2 \leq 1}$.
\end{fact}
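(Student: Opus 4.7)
The plan is to invoke \prettyref{alg:one} with parameters $l = r-1$ and $\mathcal{R} = 3/4$, and argue that among the at most $n$ candidate sets $\{\mathcal{S}_u'\}_{u \in V}$ returned, one of them equals the planted set $S$ with high probability. All of the technical heavy lifting has already been done in \prettyref{lem:4.6} and \prettyref{lem:4.7}, so the proof is essentially a packaging step: I need to identify a special vertex $u^* \in S$ and show that the initialization plus the greedy loop reconstruct exactly $S$ on the iteration for $u^*$.

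First, I will apply \prettyref{lem:4.7}, which under the given hypothesis on $k$ produces (with high probability over the input) a vertex $u^* \in S$ for which $\mathcal{B}_{u^*}(r-1, 3/4, V) = \mathcal{B}_{u^*}(r-1, 3/4, S)$, and in particular $\mathcal{S}_{u^*} \subseteq S$. Consequently $\mathcal{S}_{u^*}' \defeq \{u^*\} \cup \mathcal{S}_{u^*}$ is contained in $S$, and since $S$ is by construction an independent set (the random and adversarial hyperedges all touch $V \setminus S$), the check in step 4 of \prettyref{alg:one} will pass and the algorithm will enter the greedy loop on this iteration.

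Next, I will analyze the greedy loop for $u = u^*$. For any $v \in V \setminus S$, \prettyref{lem:4.6} guarantees the existence of a hyperedge $e = \{v_1, \ldots, v_{r-1}\} \cup \{v\} \in E$ with $\{v_1, \ldots, v_{r-1}\} \in \mathcal{B}_{u^*}(r-1, 3/4, S) \subseteq \mathcal{S}_{u^*}$. Hence $\mathcal{S}_{u^*}' \cup \{v\}$ already contains $e$, so the independence check in \prettyref{step:greedy} fails and no vertex of $V \setminus S$ is ever added. Conversely, any $v \in S \setminus \mathcal{S}_{u^*}$ can be added at some point during the loop: at every stage of the greedy process $\mathcal{S}_{u^*}' \cup \{v\} \subseteq S$, and since $S$ is an independent set this union contains no hyperedge. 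Therefore by the end of the iteration $\mathcal{S}_{u^*}' = S$, and $S$ appears in the returned list.

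Finally, for the runtime: solving \prettyref{sdp:his} has size polynomial in the number of variables and constraints, which is $n^{\bigo{r}}$, and hence is solvable in $n^{\bigo{r}}$ time. Computing $\mathcal{B}_u(r-1, 3/4, V)$ for a fixed $u$ requires enumerating all $\binom{n}{r-1} = n^{\bigo{r}}$ tuples and evaluating one inner product each; the independence checks in step 4 and \prettyref{step:greedy} each reduce to examining $\bigo{n^{r}}$ candidate hyperedges. The outer loop runs $n$ times, yielding an overall runtime of $n^{\bigo{r}}$. I don't anticipate a genuine obstacle in this proof because the substance lives in \prettyref{lem:4.4}--\prettyref{lem:4.7}; the only subtlety is checking that the monotone adversary does not interfere, which is immediate since the adversary only adds hyperedges touching $V \setminus S$ (and thus cannot create an edge inside $S$, nor can it destroy the edges certifying \prettyref{lem:4.6}, since hyperedges are only added, never removed).
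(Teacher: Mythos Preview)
Your proposal addresses the wrong statement. The item under consideration is \prettyref{fact:groth}, Grothendieck's inequality: for an $n \times n$ real matrix $M$,
\[
\max_{\{x_i\},\{y_j\} \subseteq B_2^n} \abs{\sum_{i,j} M_{ij}\inprod{x_i,y_j}} \le 2 \max_{\alpha_i,\beta_j \in \{\pm 1\}} \abs{\sum_{i,j} M_{ij}\alpha_i\beta_j}.
\]
In the paper this is not proved at all; it is recorded in the appendix as a standard fact with a citation to \cite{gudon2014community}, and is simply invoked inside the proof of \prettyref{prop:grothendieck_bound}. What you have written is a proof of \prettyref{thm:main_formal} (exact recovery of $S$ via \prettyref{alg:one} with $l=r-1$, $\mathcal{R}=3/4$), not of Grothendieck's inequality. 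Nothing in your argument --- \prettyref{lem:4.6}, \prettyref{lem:4.7}, the greedy loop analysis, the runtime discussion --- bears on the stated inequality about bilinear forms over the unit ball versus over $\{\pm 1\}^n$.

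If your intention was indeed to prove \prettyref{thm:main_formal}, then your write-up matches the paper's own proof essentially line for line and is correct. But as a proof of \prettyref{fact:groth} it is a non sequitur; the appropriate response for that statement is simply to note that it is a classical result quoted without proof.
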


\section{Proof of feasibility of $x'$}\label{app:feasibility_proof}
Here we prove that the solution $x'$ is feasible for \prettyref{sdp:his} as stated in equation \prettyref{eq:feasible}.

\begin{enumerate}
\item
First we verify that $\norm{x'_i}^2 = 1, \; \forall  i \in V$
\begin{align*}
&\norm{x'_i}^2 = \norm{\hat{e}}^2 = 1, \; \forall i \in S  & \paren{\text{as defined}}\\
&\norm{x'_i}^2 = \norm{x^*}^2 = 1, \; \forall i \in V \setminus S &  \paren{x^* \text{ is also feasible}}
\end{align*}

\item
The constraint $\norm{x'_{e}}^2 = 0$ holds by construction since for any edge $e = \set{i_1,i_2,\hdots,i_r}$ such that $\set{i_1,i_2,\hdots,i_r} \cap S \neq \emptyset \neq \set{i_1,i_2,\hdots,i_r} \cap (V \setminus S)$ the feasible solution  $x'$ sets $x'_{i_1,i_2,\hdots,i_r} = 0$. Similarly for any edge $e$ completely inside $V \setminus S$, $x'_e = x^*_e = 0$.

\item
Next, we verify the constraint $\inprod{x'_I,x'_J} = \norm{x'_{I \cup J}}^2$. 

For $I, J \in S$, 
\[\ x'_I  = x'_J = x'_{I \cup J} = \hat{e} \implies \inprod{x'_I,x'_J} = \norm{x'_{I \cup J}}^2\mper\]
For $ I, J \in V \setminus S$,
\[x'=x^* \implies \inprod{x'_I,x'_J} = \norm{x'_{I \cup J}}^2 \textrm{ since $x^*$ is feasible}.\]
If $I \in \partial{S}$ (resp. $J \in \partial{S}$) then $I \cup J \in \partial{S}$. Moreover, by construction,
 $x'_I = 0 $ (resp. $x'_J = 0$) and $\norm{x'_{I \cup J}} = 0 = \inprod{x'_I,x'_J}$.
If $I,J \notin \partial{S}$ but $I \cup J \in \partial{S}$, then w.l.o.g. assuming $I \in S$ and $J \in V \setminus S$,
\[\inprod{x'_I,x'_J} = \inprod{\hat{e},x^*_I} = 0  = \norm{x'_{I \cup J}}^2 . \]

\item
To verify constraint \prettyref{eq:subset_constraint}, we consider $x'_I,x'_J$ such that $I \subseteq J$. We note that for $J \subseteq S$ or $J \subseteq V \setminus S$ the constraint holds by our definition of $x'$. Also for $I \cap S \neq \phi \neq I \cap (V\setminus S)$, the term on both sides are 0 and the constraint holds with an equality. The only case that remains to consider is when $J \cap S \neq \phi \neq J \cap (V \setminus S)$ but either $I \subseteq S$ or $I \subseteq V \setminus S$. For this case, $\inprod{x'_u,x'_J} = 0$ and by the SDP constraint $\inprod{x_u,x_I} \geq 0$, this holds true as well.

\item
To verify constraint \prettyref{eq:union_bound}, consider $x'_u,x'_{v_1},\hdots,x'_{v_r} $ for any $ \set{u,v_1,\hdots,v_r}$. If $\set{u,v_1,\hdots,v_r} \subseteq S$, the constraint holds with equality (Both sides are 0). For $\set{u,v_1,\hdots,v_r}  \subseteq V \setminus S$, the constraint holds by definition of $x'$. If $u \in S \textrm{ and } \forall i,v_i \in V \setminus S$ or $u \in V \setminus S \textrm{ and } \forall i,v_i \in S$, the term on left is $1$ since $x'_{u, v_1,\hdots,v_r }=0$ and the expression on right is $r$. If $u \in S \textrm{ and } \set{v_1,\hdots,v_r} \cap S \neq \phi \neq V \setminus S \cap \set{v_1,\hdots,v_r} $, then the term on left is 0 and $\exists v_i : \inprod{x_u,x_{v_i}}=0$ and hence the expression on right is $\geq 1$ and constraint is satisfied. If $u \in V \setminus S \textrm{ and }\set{v_1,\hdots,v_r} \cap S \neq \phi \neq V \setminus S \cap \set{v_1,\hdots,v_r} $, then the term on left evaluates to $1$ since $x'_{u, v_1,\hdots,v_r }=0$ but by same argument as in previous case the expression on right is $\geq 1$. So the constraint holds in all cases.
\end{enumerate}

This completes the proof.

\end{document}